\tikzstyle{every picture}+=[remember picture]
\lstdefinestyle{customc}{
  belowcaptionskip=1\baselineskip,
  breaklines=true,
  frame=L,
  xleftmargin=\parindent,
  language=C,
  showstringspaces=false,
  basicstyle=\footnotesize\ttfamily,
  keywordstyle=\bfseries\color{green!40!black},
  commentstyle=\itshape\color{purple!40!black},
  identifierstyle=\color{blue},
  stringstyle=\color{orange},
}
\newcommand\fixmet[1]{}
\newcommand{\N}{\ensuremath{\mathbb{N}}}
\DeclareMathOperator{\A}{\textsf{A}}
\newtheorem{ass}{Assumption}[section]
\newtheorem{fact}{Fact}[section]
\newtheorem{prop}{Proposition}[section]
\newtheorem{LM}{Lemma}[section]
\newtheorem{thm}{Theorem}[section]
\newtheorem{df}{Definition}[section]
\newtheorem{cor}{Corollary}[section]
\theoremstyle{pourlesremarques}
\newtheorem{ex}{Example}[section]
\newcommand{\R}{\mathbb{R}}
\renewcommand{\l}{\lambda}
\newcommand{\C}{\mathbb{C}}
\newcommand{\Q}{\mathbb{Q}}
\renewcommand{\H}{\mathbb{H}}
\newcommand{\G}{\mathbb{G}}
\newcommand{\Z}{\mathbb{Z}}
\renewcommand{\i}{\mathbf{id_E}}
\newcommand{\f}{\mathbf{f}}
\newcommand{\bb}{\mathbf{b}}
\newcommand{\cb}{\mathbf{c}}
\newcommand{\U}{\mathbb{U}}
\newcommand{\Ab}{\mathbf{A}}
\newcommand{\F}{\mathbf{F}}
\begin{document}
\title{On the Termination of Linear and Affine Programs over the Integers}

\author{Rachid Rebiha\thanks{Instituto  de Computa\c{c}\~{a}o, Universidade Estadual  de Campinas, 13081\-970  Campinas,  SP.  Pesquisa desenvolvida com suporte financeiro da FAPESP, processos 2011\/08947\-1 e FAPESP BEPE 2013\/04734\-9 } \and  Arnaldo Vieira Moura\thanks{Instituto  de Computa\c{c}\~{a}o, Universidade Estadual  de Campinas, 13081\-970  Campinas,  SP.} \and Nadir Matringe \thanks{ Universit\'{e} de Poitiers, Laboratoire Math\'{e}matiques et Applications and Institue de Mathematiques de Jussieu Universit\'{e} Paris 7-Denis Diderot, France.}}

\date{\displaydate{date}}
\maketitle

\begin{abstract} 
The termination problem for affine programs over the \emph{integers} was left open in~\cite{Braverman}. For more that a decade, it has been considered and cited as a challenging open problem.  To the best of our knowledge, we present here the most complete response to this issue: we show that 
termination for affine programs over $\Z$ is decidable under an assumption holding for almost all affine programs,
except for an extremely small class of zero Lesbegue measure. We use the notion of \emph{asymptotically non-terminating initial variable values} ($ANT$, for short) for linear loop programs over $\Z$. Those  values are directly associated to initial variable values for which the corresponding program does not terminate. 
We reduce the termination problem of linear affine programs over the integers to the emptiness check of a specific $ANT$ set of  initial variable values. 
For this class of linear or affine programs,  we prove that the corresponding $ANT$ set is a semi-linear space and we provide a powerful computational methods allowing the automatic generation of these $ANT$ sets.  
Moreover, we are able to address the conditional termination problem too.
In other words, by taking $ANT$ set complements, we obtain a precise under-approximation of the set of inputs for which the program does terminate.  
\end{abstract}
\section{Introduction}

The \emph{halting problem} is equivalent to the problem of deciding whether a given program will eventually terminate when
running with a given input. The \emph{termination problem} can be
stated as follows: given an arbitrary program, decide whether the
program eventually halts for every possible input. Both
problems are known to be undecidable, in general~\cite{turing}. 
The \emph{conditional termination problem} \cite{Cook:2008} asks for preconditions representing input data that will cause the program to terminate when run with such input data.  
In practice, this problem appears to be central for the verification of liveness properties that any well behaved and engineered system must guarantee. 
As it happens frequently, a program may terminate only for a specific set of input data values. 
Also, generating input data that demonstrates critical defects and vulnerabilities in programs  allows for new looks at these liveness properties.

In the present work we address the termination and conditional termination problem over linear and affine $\textsf{while}$ loop programs over the \emph{integers}.  In matrix terms, this class of programs can be expressed in the following form:
$$\texttt{while (F$\cdot$x > b) \{x := A$\cdot$x+c\}},$$
where $A$ and $F$ are matrices, $b$ and $c$ are
vectors over the integers, and $x$ is a vector of variables  over $\N$ or $\Z$. The loop condition is a conjunction of linear or affine inequalities and the assignments to each  of the variables in the loop instruction block are affine or linear forms. 
Static analysis and automated verification methods for programs presented in a more complex form can often be reduced to the study of a program expressed in this basic affine form~\cite{Cook:2006}. 
 
Recent approaches for termination analysis of imperative loop programs have focused on partial decision procedures based on the discovery and synthesis of ranking functions \cite{Colon:2001, Colon02}. 
Several interesting techniques are based on the generation of \emph{linear} ranking functions for linear loop programs~\cite{Bradley05linearranking, Bradley05terminationanalysis}. 
There are also effective heuristics \cite{Dams,Colon02} and complete methods for the synthesis of \emph{linear} ranking functions \cite{Podelski04}. On the other hand, there are simple linear terminating loop programs for which 
it can be proved that there are no linear ranking functions.

On the problem of decidability results for  termination of linear and affine programs, the work of Tiwari et al. \cite{tiwaricav04} is often cited when treating linear programs over the \emph{reals}. 
For linear programs over the \emph{rationals} and \emph{integers}, some of those theoretical results have been extended~\cite{Braverman}.  
 But the termination problem for \emph{general affine} programs over the 
 \emph{integers} was left open
in~\cite{Braverman}. For more than a decade, it has been considered and cited as a challenging open problem \cite{Braverman, tiwaricav04, JO, Ben-Amram:2012, Bozga}.  
This question was considered, but not completely answered, in \cite{tiwaricav04, Braverman}. 
Recently, in \cite{JO}, using strong results from analytic number theory and diophantine 
geometry, the authors were able to answer positively this question,
but only when the corresponding  transition matrices were restricted to a semi-simple form. 
In that work,  $ANT$ sets are not explicitly computed, because it is very 
hard by their approach --- in fact, it is not proved they form semi-linear spaces, --- and quantifier elimination is used.
To the best of our knowledge, we present here the most complete response to this open problem: we show that 
termination for general affine programs over $\Z$ is decidable under an assumption holding for almost all affine programs except for a very small class of zero Lesbegue measure.
Our contribution is not limited to such theoretical decidability results, 
as we also provide efficient computational methods to decide termination and conditional termination for this class of programs. 
More specifically, in our approach the computation of the $ANT$ set becomes simple, and it is described explicitly as a semi-linear space, without using quantifier elimination.\\

Concerning the termination and the conditional termination analysis problems,
 we could cite briefly the following recent developments.  
The framework presented in \cite{Cousot:2012} is devoted to approaches establishing termination by abstract interpretation over the termination semantics. 
The approach exposed in \cite{Cook:2008} searches for non-terminating program executions. 
The recent literature on \emph{conditional (non-)termination} narrows down to the works presented in
\cite{Gulwani:2008,Cook:2008, Bozga}. 
The methods proposed in \cite{Gulwani:2008} allow for the generation of non-linear preconditions.
In \cite{Cook:2008}, the authors derived termination preconditions for simple
programs ---~with only one loop condition~--- by guessing a ranking function
and inferring a supporting assertion. 
Also, the interesting approach
provided in \cite{Bozga} focuses mostly on proofs of decidability and consider several systems and models, but is restricted to 
two specific subclasses of linear relations. 
Despite tremendous progress over the years~\cite{Braverman, tiwaricav04,Bradley05Manna, Chen2012, Cousot:2012, Cook:2006, Ben,Gulwani:2008,Cook:2008, 2013:POPL}, 
the problem of finding a practical, sound and complete method, \emph{i.e.}, an encoding leading deterministically to an algorithm, for determining (conditional) termination remains very challenging.
Some more closely related works, \emph{e.g.} \cite{Gulwani:2008,Cook:2008, Bozga, JO, Braverman, tiwaricav04}, will be discussed in more details in Section \ref{discussion}.

Our initial investigations were reported in~\cite{TR-IC-13-07, TR-IC-13-08} where we discussed  termination analysis algorithms that ran in polynomial time complexity. Subsequent studies considered the set of \emph{asymptotically non-terminating initial variable values} ($ANT$, for short) whose elements are directly related to input values for which the loop does not terminate~\cite{scss2013rebiha}.
In that work, we approached the problem of generating the $ANT$ set for a restricted class of linear programs over the reals, with only one loop condition, and where the associated linear forms of the loop lead to diagonalizable systems with no complex eigenvalues.  
In \cite{TR-IC-14-09}, we showed how to compute the $ANT$ set for linear or affine programs over $\R$. 
In that work we also successfully treated the case of linear or affine programs over $\Z$ in cases where the transition matrices admit a real spectrum.
Here, we remove these restrictions. We show how to handle complex eigenvalues, linear affine programs over $\R$, $\Q$, $\N$ or $\Z$, with conjunctions of several loop conditions, and where the system does not have to be diagonalizable. 
We thus drastically generalize the earlier results in \cite{scss2013rebiha, TR-IC-14-09}.  
Further, we introduce new static analysis methods that compute $ANT$ sets, and also yield a set of initial inputs values for which the program does terminate.  This attests the innovation of our contributions, \emph{i.e.}, none of the other mentioned works is capable of generating such critical information for non-terminating loops.

We summarize our contributions as follows, with all 
results rigorously stated and proved:
\begin{description}
\item[\it On static input data analysis\rm:]\mbox{}
\begin{itemize}
\item We recall the important key concept of an $ANT$ set \cite{scss2013rebiha} for linear loop programs over the integers. 
Theorems \ref{equivalence}, \ref{gen2hom} and \ref{aff2gen} already show the importance of $ANT$ sets. These results provide us with \emph{necessary and sufficient conditions} for the termination of linear programs over the integers. 
\item 
For almost the whole class of linear, respectively affine, programs over $\Z$,
namely those with transition matrix 
satisfying our Assumption \ref{a1}, respectively Assumption \ref{a2}, 
we prove that the set of asymptotically non-terminating inputs can be computed explicitly as a \emph{semi-linear} space. 
Further, we show in Section \ref{measure} that almost all linear or affine programs belong to these classes, 
except for an extremely small specific class  of zero Lesbegue measure.
We are also capable of automatically generating a set of linear equalities and inequalities describing a semi-linear space that symbolically and exactly represents such $ANT$ sets. See Theorem \ref{generation-ANT}. 
\item Even if these results are mathematical in nature, they are easy to apply.
In a practical static analysis scenario, one only needs to focus on ready-to-use generic formulas that represent the $ANT$ sets for affine programs over $\Z$. See Definition \ref{eq1}, Eqs. (1), (2), (3) and (4). 
Such $ANT$ set representations allow for practical computational manipulations --- like union, intersection, complement and emptiness check, --- and practical implementations. 
\end{itemize}
\item[\it On static termination and conditional termination analysis\rm:]\mbox{} 
\begin{itemize}
\item We reduce the problem of termination for linear programs over the integers to the emptiness check of the corresponding $ANT$ set.   
This characterization of terminating linear programs provides us with a deterministic computational procedure to \emph{check program termination over $\Z$}, that is, we show that
an affine program $P$ is terminating if and only if $P$ has an empty $ANT$ set.  
\item  Also, the $ANT$ \emph{complement} set is a precise under-approximation of the set of terminating inputs for the same program.  
This complement set gives rise to a  loop precondition for termination. Thus, we obtain a computational methods for \emph{conditional termination} analysis.
\end{itemize}
\item[\it On decidability results for the termination problem over $\Z$\rm:]\mbox{}
\begin{itemize}
\item we obtain new decidability results for the program termination problem. 
Here, we successfully address the question left open in \cite{Braverman}, namely, we settle the decidability problem for program termination in the case of affine programs over the integers. Under our Assumption ($\mathcal{A}$) (see Fact~ \ref{a2}), we prove that the termination problem for affine programs over $\Z$ is decidable.  
\item We provide a complete measure analysis of this assumption and show that our decidability results holds almost for all linear/affine programs overs $\Z$ (i.e., we prove that the class of affine programs not satisfying our assumption is of Lesbegue measure zero.). 
\end{itemize}
\end{description}

Before concluding this section, we introduce a motivating example.
\begin{ex}\label{ex-motivation}\emph{(Motivating Example)} Consider the  program:

\begin{center}\begin{minipage}{0.5\textwidth}
\begin{lstlisting}
while(x-1/2y-2z>0){
 x:=-20x-9y+75z;
 y:=-7/20x+97/20y+21/4z;
 z:=35/97x+3/97y-40/97z;}
\end{lstlisting}
\end{minipage}\end{center}

\noindent The initial values of $x$, $y$ and $z$ are represented, respectively, by the parameters $u_1$, $u_2$ and $u_3$. Our prototype outputs the following $ANT$ set:
\noindent\begin{small}
\begin{verbatim}
Locus of ANT:
    [[u1<-u2+3*u3]]OR[[u1==-u2+3*u3,-u3<u2]]OR[[u1==4*u3,u2==-u3,0<u3]].
\end{verbatim}  
\end{small}

\begin{description}
\item[\small The static input data analysis\rm:] This semi-linear space  
represents symbolically all asymptotically initial values that are directly associated to initial values for which the program does not terminate.
\item[\small The conditional termination analysis\rm:] The complement of this set 
 is a precise under-approximation of the set of all initial values for which the program terminates.
\item[\small The termination analysis\rm:] The problem of termination is reduced to the emptiness check of this $ANT$ set.  
\end{description}
\end{ex}

The paper is organized as follows.  Section \ref{prelim} presents basic results from linear algebra and also defines the computational model used to represent linear loop programs in matrix notation.
Section \ref{ANT} introduces the notion of $ANT$ initial values, and 
presents the first important results for termination analysis. 
Section \ref{1loop} provides an efficient computational method for generating a symbolic representation of the $ANT$ set for linear homogeneous programs.  This section also states the ready-to-use formulas representing symbolically and exactly the $ANT$ sets for linear homogeneous programs.   Section \ref{general} reduces the study of generalized linear homogeneous and  affine loop programs
to that of linear homogeneous programs with a loop condition described by a single homogeneous inequality.  
Section \ref{measure} shows that our decidability result  holds for all linear  or affine programs, except for an extremely restricted class of zero measure programs.
We provide a complete discussion in Section \ref{discussion}. Finally, 
Section \ref{conclusion} states our conclusions. 
\section{Preliminaries}\label{prelim}
In Subsection \ref{algebra} we recall some classical concepts and results from linear algebra. 
In particular, we recall the Jacobian basis, in Theorem \ref{Jnr}, and note a very useful basis, in Theorem \ref{JU}. 
In subsection \ref{classification} we introduce  the computational model for loop programs in matrix notation, and
we provide an important classification for loop programs.   

\subsection{Linear Algebra}\label{algebra}
In the following, $E$ will be a finite dimensional vector space over $\R$. 
Let $A$ belong to $End_\R(E)$, the space of $\R$-linear maps from $E$ to itself, and 
let $E^*$ be the set of linear functionals in $E$, \emph{i.e.}, of mappings from $E$ to
$\R$. In the sequel we will assume that $f$ is a functional in $E^*$. 
We denote by $\mathcal{M}(p,q,\R)$ the space of $p\times q$ matrices.
When $p=q$ we may write $\mathcal{M}(p,\R)$. 
If $B$ is a basis of $E$, we denote by $Mat_B(A)\in \mathcal{M}(n,\R)$ the matrix
representation of $A$ in the basis $B$. 
Let $I_n$ be the identity matrix in $\mathcal{M}(n,\R)$, and let $\i$ the identity of $End_\R(E)$. 

We will denote by $Spec(A)$ the set of complex eigenvalues of $A$, by $Spec_\R(A)$ the set of real eigenvalues of $A$, and 
by $Spec_{>0}(A)$ the set of positive eigenvalues of $A$. In particular, we have 
$$Spec_{>0}(A)\subset Spec_\R(A)\subset Spec(A).$$ We will also denote by $|Spec(A)|$, the set $$\{|\mu|,\mu \in Spec(A)\},$$ and by $Spec_H(A)$, the intersection 
of $Spec(A)$ with the Poincaré upper half plane $$H=\{z\in \C, \ Im(z)>0\}.$$

For $\l \in Spec_\R(A)$, we write $E_{\lambda}$ for the characteristic subspace of $A$ 
associated to $\l$, which is the kernel $$Ker((A- \l I_d)^{d_\l}),$$ 
where $d_\l$ is the multiplicity 
of $\l$ in the characteristic polynomial $\chi_A$ of $A$. 
The non-real complex eigenvalues of $A$ 
come into couples of conjugate complex numbers.
If $\l$ is such an eigenvalue, with $\overline{\l}$ its conjugate, 
we write $E_{\{\l,\overline{\l}\}}$ for $$Ker[((A -\l I_d)\circ(A - \overline{\l} I_d))^{d_\l}],$$ where $\circ$ is the composition operator, and again $d_\l$ is the multiplicity 
of $\l$ in the characteristic polynomial of $A$. With these notations, we have the following direct sum decomposition:
$$E=\oplus_{\l \in Spec_\R(A)} E_\l \oplus _{\l \in Spec_H(A)} E_{\{\l,\overline{\l}\}}.$$ 

We recall the Jordan canonical basis theorem for $A$.
\begin{thm}\label{Jnr}
Let $\l$ belong to $Spec_\R(A)$. There is a basis $J_\l$ of $E_{\l}$ such that 
$Mat_{J_\l}(A_{|E_\l})=diag(U_{\l,1},\dots,U_{\l,r_\l})$ for a positive integer $r_\l$, where each $U_{\l,i}$ is of the form 
$\begin{pmatrix} \l & 1 &  &   &   &  \\
  & \l & 1&   &   &  \\
  &  & \ddots &\ddots &   & \\
  &  &  & \l & 1  & \\
  &  &  &  & \l  & 1 \\
    &  &  &  &   & \l
\end{pmatrix}$.
\end{thm}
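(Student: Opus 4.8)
The statement to be proved is the Jordan canonical basis theorem (Theorem~\ref{Jnr}) for the restriction $A_{|E_\l}$, where $\l \in Spec_\R(A)$. This is classical, and the cleanest route is to reduce it to the structure theorem for nilpotent endomorphisms and then prove that by induction on the dimension.

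\textbf{Reduction to the nilpotent case.} First I would set $N := (A - \l I)_{|E_\l} \in End_\R(E_\l)$. By the very definition $E_\l = Ker((A - \l I)^{d_\l})$, we have $N^{d_\l} = 0$, so $N$ is nilpotent on $E_\l$. Suppose one can exhibit a basis $J_\l$ of $E_\l$ in which $Mat_{J_\l}(N)$ is block diagonal, each block having $0$ on the diagonal, $1$ on the first superdiagonal, and $0$ elsewhere. Since $A_{|E_\l} = N + \l\cdot(\text{identity of }E_\l)$, adding $\l$ to the diagonal of each such block turns it into a block $U_{\l,i}$ of the stated shape; the number $r_\l$ of blocks and their sizes are exactly those produced for $N$. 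Hence it suffices to prove the nilpotent statement: every nilpotent $N \in End_\R(V)$, $\dim V = m < \infty$, admits a basis that is a disjoint union of Jordan chains $\{v, Nv, \dots, N^{k-1}v\}$ with $N^k v = 0$.

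\textbf{The induction.} I would argue by induction on $m$, the cases $m \le 1$ being immediate. For the inductive step, consider $W := Im(N)$. As $N$ is nilpotent and $V \neq 0$, $N$ is not injective, so $\dim W < m$; moreover $N(W) \subseteq W$ and $N_{|W}$ is nilpotent, so by the induction hypothesis $W$ has a basis consisting of Jordan chains $\{u_i, Nu_i, \dots, N^{k_i - 1}u_i\}$, $1 \le i \le t$. Their bottom vectors $N^{k_i - 1}u_i$ form a basis of $W \cap Ker(N)$. Now I (a) lift each top vector: since $u_i \in W = Im(N)$, choose $\tilde v_i \in V$ with $N\tilde v_i = u_i$, extending the $i$-th chain upward by one vector; and (b) complete the independent family $\{N^{k_i - 1}u_i\}$ of $W \cap Ker(N)$ to a basis of $Ker(N)$ by adjoining vectors $z_1, \dots, z_s$, each regarded as a chain of length one. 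Let $\mathcal{B}$ be the union of all these (extended and new) chains.

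\textbf{Verification and the main difficulty.} It remains to check $\mathcal{B}$ is a basis of $V$. The cardinality count is routine: the original chains contribute $\dim W$ vectors, step (a) adds $t = \dim(W \cap Ker(N))$ vectors, step (b) adds $\dim Ker(N) - \dim(W \cap Ker(N))$ vectors, for a total of $\dim W + \dim Ker(N) = m$ by rank--nullity. The real work is linear independence: starting from a vanishing combination of the vectors of $\mathcal{B}$, apply $N$; this kills the $z_j$'s, sends each $\tilde v_i$ to $u_i$, and shifts every chain down one step, producing a relation among the inductively constructed basis of $W$, so by the induction hypothesis all coefficients vanish except possibly those of the bottom vectors $N^{k_i-1}u_i$ and the $z_j$'s; the leftover relation then lives in $Ker(N)$, where the family $\{N^{k_i-1}u_i\} \cup \{z_j\}$ was chosen in step (b) to be a basis, forcing the remaining coefficients to vanish as well. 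I expect the index bookkeeping in this last step --- correctly tracking which coefficients survive after applying $N$ and confirming that the surviving relation is genuinely one among the inductive basis of $W$ and then one inside $Ker(N)$ --- to be the only delicate point; the reduction to the nilpotent case and the restoration of $\l I$ are entirely routine.
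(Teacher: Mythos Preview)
Your proof is correct: the reduction to the nilpotent endomorphism $N=(A-\l I)_{|E_\l}$ is immediate, and your inductive construction of Jordan chains for $N$ via $W=\operatorname{Im}(N)$, lifting the tops of the chains and completing inside $\operatorname{Ker}(N)$, is a standard and fully valid argument. The linear-independence verification you sketch is exactly right, and the bookkeeping you flag as delicate does work out as you describe.

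As for comparison with the paper: there is nothing to compare. The paper does not prove Theorem~\ref{Jnr}; it is introduced with the sentence ``We recall the Jordan canonical basis theorem for $A$'' and stated without proof, as a classical preliminary from linear algebra (and likewise for Theorem~\ref{JU}). Your proposal therefore supplies strictly more than the paper does on this point.
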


For $\mu=a+ib=|\mu|e^{i\theta_\mu}$ a complex eigenvalue in $Spec_H(A)$, we denote by $s(\mu,\overline{\mu})$ the matrix 
$$s(\mu,\overline{\mu})=\begin{pmatrix} a & -b\\ b & a\end{pmatrix}= |\mu|r(\mu,\overline{\mu}),$$ where 
$$r(\mu,\overline{\mu})=\begin{pmatrix} cos(\theta_\mu) & -sin(\theta_\mu)\\ sin(\theta_\mu) & cos(\theta_\mu)\end{pmatrix}.$$ 
Similarly, we have the following theorem for $u$'s restriction to $E_{\mu,\overline{\mu}}$.

\begin{thm}\label{JU}
Let $\mu$ belong to $Spec_H(A)$. There is a basis $J_{\mu,\overline{\mu}}$ of $E_{\mu,\overline{\mu}}$ such that 
$$Mat_{J_{\mu,\overline{\mu}}}(A_{|E_{\mu,\overline{\mu}}})=diag(U_{{\mu,\overline{\mu}},1},\dots,U_{{\mu,\overline{\mu}},r_{\mu,\overline{\mu}}})$$ for a positive integer $r_{\mu,\overline{\mu}}$, where each $U_{{\mu,\overline{\mu}},i}$ is of the form 
$$\begin{pmatrix} s(\mu,\overline{\mu}) & I_2 &  &   &   &  \\
  & s(\mu,\overline{\mu}) & I_2&   &   &  \\
  &  & \ddots &\ddots &   & \\
  &  &  & s(\mu,\overline{\mu}) & I_2  & \\
  &  &  &  & s(\mu,\overline{\mu}) & I_2 \\
    &  &  &  &   & s(\mu,\overline{\mu})
\end{pmatrix}.$$
\end{thm}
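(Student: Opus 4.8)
The plan is to prove Theorem~\ref{JU} by complexification, reducing it to the complex Jordan decomposition (the analogue over $\C$ of Theorem~\ref{Jnr}, applied to the $\C$-linear extension of $A$ on a generalized eigenspace). Write $E_\C=E\otimes_\R\C$ for the complexification of $E$ and $A_\C\in End_\C(E_\C)$ for the $\C$-linear extension of $A$; let $\sigma$ denote the conjugation $v\otimes z\mapsto v\otimes\overline{z}$ of $E_\C$, so that $A_\C\circ\sigma=\sigma\circ A_\C$ since $A$ is real, and $E$ sits inside $E_\C$ as the fixed set of $\sigma$. First I would check that the complexification of the real subspace $E_{\mu,\overline{\mu}}$ equals $E_\mu\oplus E_{\overline{\mu}}$, the sum of the generalized eigenspaces of $A_\C$ for $\mu$ and $\overline{\mu}$: over $\C$ the polynomial $\big((X-\mu)(X-\overline{\mu})\big)^{d_\mu}$ factors as $(X-\mu)^{d_\mu}(X-\overline{\mu})^{d_\mu}$ into coprime factors, and since $d_\mu$ is the algebraic multiplicity of $\mu$ (hence of $\overline{\mu}$), the kernel of $\big((A_\C-\mu\,\i)(A_\C-\overline{\mu}\,\i)\big)^{d_\mu}$ splits as $Ker((A_\C-\mu\,\i)^{d_\mu})\oplus Ker((A_\C-\overline{\mu}\,\i)^{d_\mu})=E_\mu\oplus E_{\overline{\mu}}$. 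Moreover $\sigma$ exchanges $E_\mu$ and $E_{\overline{\mu}}$.

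Next I would apply the complex Jordan theorem to $A_\C$ restricted to $E_\mu$, obtaining a basis of $E_\mu$ which is a disjoint union of Jordan chains $v_1^{(i)},\dots,v_{k_i}^{(i)}$, $i=1,\dots,r_{\mu,\overline{\mu}}$, with $A_\C v_1^{(i)}=\mu v_1^{(i)}$ and $A_\C v_j^{(i)}=\mu v_j^{(i)}+v_{j-1}^{(i)}$ for $j\ge 2$. Applying $\sigma$, the vectors $\sigma(v_j^{(i)})$ form a corresponding basis of $E_{\overline{\mu}}$ made of Jordan chains for $\overline{\mu}$. For each chain set $x_j^{(i)}=\tfrac{1}{2}\big(v_j^{(i)}+\sigma(v_j^{(i)})\big)$ and $y_j^{(i)}=\tfrac{1}{2i}\big(v_j^{(i)}-\sigma(v_j^{(i)})\big)$; these are $\sigma$-fixed, hence lie in $E$, and in fact in $E_{\mu,\overline{\mu}}$. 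Since on each pair $\{v_j^{(i)},\sigma(v_j^{(i)})\}$ the change of basis to $\{x_j^{(i)},y_j^{(i)}\}$ is invertible over $\C$, the family $\{x_j^{(i)},y_j^{(i)}\}$ is again a $\C$-basis of $E_\mu\oplus E_{\overline{\mu}}=(E_{\mu,\overline{\mu}})_\C$; being made of real vectors and of cardinality $\dim_\R E_{\mu,\overline{\mu}}$, it is an $\R$-basis of $E_{\mu,\overline{\mu}}$, and this, suitably ordered, will be $J_{\mu,\overline{\mu}}$.

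Finally, writing $\mu=a+ib$ and expanding $A_\C v_j^{(i)}=\mu v_j^{(i)}+v_{j-1}^{(i)}$ into real and imaginary parts gives $A x_j^{(i)}=a x_j^{(i)}-b y_j^{(i)}+x_{j-1}^{(i)}$ and $A y_j^{(i)}=b x_j^{(i)}+a y_j^{(i)}+y_{j-1}^{(i)}$ (with $x_0^{(i)}=y_0^{(i)}=0$). Ordering each chain as $y_1^{(i)},x_1^{(i)},y_2^{(i)},x_2^{(i)},\dots$, these relations say precisely that the matrix of $A_{|E_{\mu,\overline{\mu}}}$ in this basis is block diagonal with $i$-th block of the announced form, its $2\times2$ diagonal sub-blocks being $s(\mu,\overline{\mu})=\begin{pmatrix} a & -b \\ b & a\end{pmatrix}$ and its $2\times2$ super-diagonal sub-blocks being $I_2$. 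I expect the only genuinely delicate point to be the verification that the real and imaginary parts of a complex Jordan basis of $E_\mu$ form an $\R$-basis of $E_{\mu,\overline{\mu}}$ — i.e., the linear-independence bookkeeping across the conjugate pair, together with pinning down the ordering and sign conventions so that the diagonal blocks come out exactly as $s(\mu,\overline{\mu})$ and not its transpose; the rest is a direct computation.
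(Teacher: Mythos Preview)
The paper does not prove this theorem: it is stated in the preliminaries as a classical fact from linear algebra, recalled without proof (just like Theorem~\ref{Jnr}). So there is no ``paper's own proof'' to compare against.

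Your argument is correct and is the standard one: complexify, apply the complex Jordan theorem on the generalized eigenspace $E_\mu\subset E_\C$, transport the chains to $E_{\overline{\mu}}$ via conjugation, and take real and imaginary parts. The identification $(E_{\mu,\overline{\mu}})_\C=E_\mu\oplus E_{\overline{\mu}}$ via the coprime factorization is fine, and your verification that the ordering $y_1^{(i)},x_1^{(i)},y_2^{(i)},x_2^{(i)},\dots$ produces exactly the block $s(\mu,\overline{\mu})=\begin{pmatrix} a & -b\\ b & a\end{pmatrix}$ on the diagonal and $I_2$ on the superdiagonal checks out. The point you flag as delicate (that the $x_j^{(i)},y_j^{(i)}$ form an $\R$-basis of $E_{\mu,\overline{\mu}}$) is handled by exactly the argument you sketch: they are $\sigma$-fixed, hence real, they span $(E_{\mu,\overline{\mu}})_\C$ over $\C$ since the change of basis from $\{v_j^{(i)},\sigma(v_j^{(i)})\}$ is invertible, and their number equals $\dim_\R E_{\mu,\overline{\mu}}$.
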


\subsection{Classification of Loop Programs}\label{classification}

\noindent We recall, as it is standard in static program analysis, that a primed symbol $x'$ refers to the next  value of $x$ after a transition is taken. First, we present \emph{transition systems} as representations of imperative programs, and \emph{automata} as their computational models.

\begin{df}
A \emph{transition system} is given by $\langle x, L, \mathcal{T}, l_0,
\Theta \rangle$, where  $x=(x_1, ...,x_n)$ is a set of variables, 
 $L$ is a set of locations and $l_0\in L$ is the initial location. A \emph{state} is given by an interpretation of the variables in $x$. A \emph{transition} $\tau \in \mathcal{T}$ is given by a tuple
  $\langle l_{pre}, l_{post}, q_{\tau}, \rho_{\tau} \rangle$, where $l_{pre}$
  and $l_{post}$ designate the pre- and post- locations of $\tau$, respectively, and
  the transition relation $\rho_{\tau}$ is a first-order assertion over $x
  \cup x'$. The transition guard $q_{\tau}$ is a conjunction of inequalities over $x$.
  $\Theta$ is the initial condition, given as a first-order
  assertion over $x$.
The transition system is said to be \emph{linear} when
$\rho_{\tau}$ is an affine form, for all $\tau\in \mathcal{T}$. 
\end{df}
A loop program is a transition system with a single location and a single transition, written simply as $\langle x, l, \langle l, l, q_{\tau}, \rho_{\tau} \rangle,
l,\Theta \rangle$.

We will use the following matrix notation to represent loop programs and their transition systems. We also use simple and efficient procedures to captures the effects of sequential linear assignments into simultaneous updates.    

\begin{df} Let $P=\langle x, l, \langle l, l, q_{\tau}, \rho_{\tau} \rangle, l,\Theta \rangle$, with $x=(x_1,...,x_n)$,
be a loop program. We say that $P$ is a \emph{linear loop program} if:\\
\begin{itemize}
\item The transition guard is a conjunction of linear inequalities. We represent the loop condition in matrix form as $F x > b$ where $F\in\mathcal{M}(m,n,\R)$, and $b\in\R^m$.
By $F x>b$ we mean that each coordinate of  vector $F x$ is  greater than the corresponding coordinate of vector $b$.\\
\item The transition relation is a set of 
affine or linear forms. We represent the linear assignments in matrix form as $x:=Ax+c$, where $A\in\mathcal{M}(n,\R)$, and $c\in\R^n$.
\end{itemize}
The most general \emph{loop program} $P(A,F,b,c)$ is defined as $$\texttt{while (F$\cdot$x > b) \{x := A$\cdot$x+c\} }.$$
\end{df}    

We will use the following classification. 
\begin{df}\label{classprog}
From the more specific to the more general form:
\begin{description}
\item[\small Homogeneous\rm:]  We denote by $P^{\H}$ the set of programs of the form 
$$P(A,f):\texttt{while (f $\cdot$ x > b) \{x:=Ax\} },$$ where $f$ is a $1\times n$ row matrix corresponding to the loop condition, $b\in \R$, and $A\in \mathcal{M}(n,\R)$ corresponds to the list of assignments in the loop. 
\item[\small Generalized Homogeneous\rm:] We denote by $P^{\G}$ the set of programs of the form $$P(A,F):\texttt{while (F x > 0) \{x:=Ax\}},$$ where $F$ is a $(m\times n )$-matrix with rows corresponding to the $m$ loop conditions. We will sometimes write $P(A,F)=P(A,f_1,\dots,f_m)$, where the $f_i$'s are the rows of $F$.
\item[\small Affine\rm:]  We denote by $P^{\mathbb{A}}$ the set of programs of the form $$P(A,F,b,c):\texttt{while (F x > b) \{x:=Ax+c\} },$$ for $A$ and $F$ as above,  and $b, c\in \R^n$. 
\end{description}
\end{df}

\begin{ex}\label{motivation2}
Consider the homogeneous program of Example \ref{ex-motivation}. 
The sub-matrix $A=\begin{small}\begin{pmatrix}-20&-9&75\\7&8&-21\\-7&-3&26\end{pmatrix}\end{small}$ correspond to the simultaneous updates representing the sequential loop assignments and the vector $f=(1,-1/2,-2)^{\top}$ encodes the loop condition.
\end{ex}

In Section \ref{general}, we show that the termination analysis for the general class $P^{\mathbb{A}}$ can be reduced to  termination for programs in $P^{\H}$.
\section{The NT and ANT Sets}\label{ANT}

We present the new notion of \emph{asymptotically non-terminating} ($ANT$) values of a loop program \cite{scss2013rebiha}. It will be central in the analysis of non-termination.
We start with the definition of the $ANT$ set and then give the first important result for homogeneous linear programs. 
We will extend these results in Section \ref{general} to generalized linear homogeneous programs.
Then, problem of termination analysis for the general class of linear programs will be reduced to the generation and  the emptiness check of the $ANT$ set for homogeneous linear programs.

Let $E$, $A\in End_\R(E)$ and $f\in E^*$ be as introduced in Section \ref{algebra}.  In this section, we focus first on homogeneous programs  
$$P(A,f): \{\texttt{while (f$\cdot$x>0) x:=A x}\}.$$ 
Given a basis $B$ of $E$ we write  $\texttt{A}=Mat_B(A)$, $\texttt{f}=Mat_B(f)$, $\texttt{x}=Mat_B(x)$, and so on. From now on, we give definitions and statements in terms of 
programs involving linear maps, and let the reader infer the obvious adaptation for programs involving matrices.
We start by giving the definition of the termination and non-termination for this class of programs. 

\begin{df}\label{ter}
Let $P(A,f)\in P^{\H}$ and let $x\in E$ be an input for $P(A,f)$.
We say that $P(A,f)$ terminates on  $x$ if and only if there exists some $k \geq 0$ such that $f(A^{k}(x))\leq 0$; otherwise we say that $P(A,f)$ is non-terminating on $x$. 
If $K\subseteq E$, we say that $P(A,f)$ is terminating on $K$ if and only if
$P(A,f)$ terminates on every input $x\in K$.
Further, program $P(A,f)$ is non-terminating ($NT$ for short) if and only if it is non-terminating on some input $x\in E$. 
\end{df}
Thus, a program $P(A,f)\in P^{\H}$ is non-terminating if there is an input $x\in E$ such that $f(A^{k}(x)) > 0$ for all $k\geq 0$. 
We denote by $NT(P(A,f))$ the set of inputs $x\in E$ for which $P(A,f)$ is non-terminating.

Next, we introduce the important notion of an asymptotically non-terminating value \cite{scss2013rebiha}. 

\begin{df}\label{ant}
We say that $x\in E$ is an asymptotically non-terminating value for 
$P(A,f)$ if there exists some $k_x\geq 0$ such that $P(A,f)$ is non-terminating on 
$A^{k_x}(x)$. 
In this case, we will also say that $P(A,f)$ is $ANT$ on $x$, or that $x$ is $ANT$ for $P(A,f)$.
If $K\subseteq E$ we say that $P(A,f)$ is $ANT$ on $K$ if it is $ANT$ on every $x\in K$.  
We will also say that $P(A,f)$ is $ANT$ if it is $ANT$ for some input $x\in E$.
\end{df}
We denote by $ANT(P(A,f))$ the set of inputs $x\in E$ that are $ANT$ for $P(A,f)$.
The $ANT$ set has a central role in the study  and analysis of termination of program on any $A$-stable subset $K$ of $E$, as we will show.

\begin{ex} Consider again Example \ref{ex-motivation}.
We first note that the program terminates on $u=(-9,3,-2)^\top$ because with this initial value no loop iteration will be performed as $fA^0 u=-13/2$.
It is also easy to check that $fA^1u=-5/2$, and that $fA^2u=17.5$.
In fact we have  $fA^k u>0$ for all $k\geq 2$, so that 
the program is non-terminating on $A^2u = (63,3,22)^{\top}$.
We conclude that the initial value $u=(-9,3,-2)^{\top}$ belongs to the $ANT$ set.  
\end{ex}

The following theorem already shows the importance of $ANT$ sets:  termination for linear programs is reduced to the emptiness check of the $ANT$ set.

\begin{thm}\label{equivalence}
The program $P(\A,\f)$ in $P^{\H}$ is $NT$ if and only if it is $ANT$. 
More generally, if $K$ is an $\A$-stable subset of $E$, the program $P(A,f)$ is terminating on $K$  if and only if $ANT(P(A,f))\cap K$ is empty.
\end{thm}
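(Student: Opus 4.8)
The plan is to prove the two equivalences separately, noting that the first statement is the special case $K = E$ of the second (since $E$ is trivially $\A$-stable, and $ANT(P(A,f)) \cap E = ANT(P(A,f))$, and "$NT$" means "$NT$ on some $x \in E$"). So I would focus on establishing: for any $\A$-stable $K \subseteq E$, the program $P(A,f)$ is terminating on $K$ if and only if $ANT(P(A,f)) \cap K = \emptyset$. It is cleaner to prove the contrapositive of each direction, i.e. to show that $P(A,f)$ is non-terminating on some $x \in K$ if and only if $ANT(P(A,f)) \cap K \neq \emptyset$.

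For the easy direction ($\Leftarrow$), suppose $x \in ANT(P(A,f)) \cap K$. By Definition \ref{ant} there is some $k_x \geq 0$ such that $P(A,f)$ is non-terminating on $A^{k_x}(x)$, i.e. $f(A^k(A^{k_x}(x))) > 0$ for all $k \geq 0$, equivalently $f(A^{k}(x)) > 0$ for all $k \geq k_x$. Since $K$ is $\A$-stable and $x \in K$, the iterate $A^{k_x}(x)$ lies in $K$, and it witnesses that $P(A,f)$ is non-terminating on an input of $K$.

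For the main direction ($\Rightarrow$), suppose $P(A,f)$ is non-terminating on some $x \in K$. Then $f(A^k(x)) > 0$ for all $k \geq 0$; in particular this holds for all $k \geq k_x$ with $k_x = 0$, so $P(A,f)$ is non-terminating on $A^{0}(x) = x$, which means precisely that $x$ is $ANT$ for $P(A,f)$ by Definition \ref{ant}. Since $x \in K$, we get $x \in ANT(P(A,f)) \cap K$, so this set is nonempty. Finally I would remark that every non-terminating input is automatically $ANT$ (take $k_x = 0$), so $NT(P(A,f)) \subseteq ANT(P(A,f))$ always; the content of the theorem is the reverse inclusion on $\A$-stable sets, which is exactly the $\Leftarrow$ direction above specialized to $K = E$.

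Honestly, there is no serious obstacle here: both directions are essentially unwinding the definitions, and the only point requiring a moment's care is the use of $\A$-stability of $K$ in the $\Leftarrow$ direction — without it, the witness $A^{k_x}(x)$ might escape $K$, and the statement would fail. I would make sure to state that dependency explicitly. The asymmetry (why one needs $ANT$ rather than $NT$ at all) is worth a sentence of motivation: $ANT$ is the "saturation" of $NT$ under backward orbits of $A$, and it is this larger, more robustly structured set that later turns out to be semi-linear and computable, whereas $NT$ itself need not be.
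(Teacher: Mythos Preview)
Your proposal is correct and follows essentially the same approach as the paper: both arguments amount to observing that $NT \subseteq ANT$ trivially (take $k_x=0$), and that conversely an $ANT$ point $x$ yields the $NT$ point $A^{k_x}(x)$, which remains in $K$ by $\A$-stability. The only organizational difference is that you derive the first statement as the special case $K=E$ of the second, whereas the paper proves the first statement separately and then remarks that the same argument works for $\A$-stable $K$; this is a cosmetic distinction, not a mathematical one.
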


\begin{proof}
It is clear that if $P(A,f)$ is $NT$, it is $ANT$ as a $NT$ value of $P(A,f)$ is, of course, also an $ANT$ value. 
Conversely, if $P(A,f)$ is $ANT$, let $x$ be an $ANT$ value.
Then $A^{k_x}(x)$ is a $NT$ value of $P(A,f)$, and so $P(A,f)$ is $NT$. The assertion for $A$-stable subspaces of $E$ is 
obvious, the proof being the same, as if $x\in K$ is $ANT$, we have $A^{k_x}(x)\in K$.
\end{proof}

The set of $NT$ values is included in the  $ANT$ set, but the most important property of an $ANT$ set resides in the fact that each of its elements gives an associated element in $NT$ for the corresponding program. That is, each element $x$ in the $ANT$ set, even if it does not necessarily belong to the $NT$ set, refers directly to initial values $A^{k_x}(x)$ for which the program does not terminate. 
Hence there is a number $k_x$ of loop iterations, departing from the initial value $x$, such that $P(A, f)$ does not terminate on $A^{k_x}(x)$. This does not imply that $x$ is $NT$ for $P(A,f)$ because the program $P(A,f)$ could terminate on $x$ by performing a number of loop iterations strictly smaller than $k_x$. 
On the other hand, the $ANT$ set is more than an over-approximation of the $NT$ set, as it provide us with a deterministic and efficient way to decide termination. 

Let ${ANT}^c$ be the complement of the $ANT$ set. It gives us an under approximation for the set of all initial values for which the program terminates. 

\begin{cor}\label{co-ANT}
Let $P(A,f)$ be  in $P^{\H}$. Then $P(A,f)$ terminates on the complementary set $ANT^c(P(A,f))$.   
\end{cor}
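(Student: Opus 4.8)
The plan is to derive this directly as a contrapositive restatement of the definition of $ANT$, using Theorem \ref{equivalence} as the conceptual engine. First I would unfold what it means for $P(A,f)$ to terminate on the set $ANT^c(P(A,f))$: by Definition \ref{ter}, it suffices to show that for every $x \notin ANT(P(A,f))$ there exists some $k \geq 0$ with $f(A^k(x)) \leq 0$. So the whole statement reduces to the single implication: if $x$ is not $ANT$, then $P(A,f)$ terminates on $x$.

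For that implication I would argue by contraposition. Suppose $P(A,f)$ does not terminate on $x$; I must show $x$ is $ANT$. But "does not terminate on $x$" means exactly, by Definition \ref{ter}, that $f(A^k(x)) > 0$ for all $k \geq 0$, i.e. $x \in NT(P(A,f))$. Now I invoke the observation already made in the proof of Theorem \ref{equivalence}: an $NT$ value is automatically an $ANT$ value — indeed, taking $k_x = 0$ in Definition \ref{ant}, the point $A^{0}(x) = x$ is itself a value on which $P(A,f)$ is non-terminating, so $x$ is $ANT$. Hence $x \in ANT(P(A,f))$, which contradicts $x \in ANT^c(P(A,f))$. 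Therefore every $x \in ANT^c(P(A,f))$ is a value on which $P(A,f)$ terminates, which is precisely the claim.

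There is essentially no obstacle here: the corollary is an immediate consequence of the set-theoretic inclusion $NT(P(A,f)) \subseteq ANT(P(A,f))$ (equivalently, $ANT^c \subseteq NT^c$), combined with the elementary fact that $NT^c$ is by definition the set of inputs on which the program terminates. The only thing to be careful about is bookkeeping with the quantifiers in Definitions \ref{ter} and \ref{ant} — in particular remembering that "$P(A,f)$ terminates on $K$" is a universally quantified statement over $x \in K$, so it is enough to handle an arbitrary fixed $x \in ANT^c(P(A,f))$. Alternatively, one could phrase the whole thing in one line as: $ANT^c(P(A,f)) \subseteq NT^c(P(A,f))$, and $NT^c(P(A,f))$ is exactly the set of inputs on which $P(A,f)$ terminates.
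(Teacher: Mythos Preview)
Your proof is correct and is essentially the same as the paper's: both rest on the inclusion $NT(P(A,f)) \subseteq ANT(P(A,f))$ and then pass to complements. The paper's version is simply the one-line formulation you give at the end.
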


\begin{proof}
As $NT(P(A,f)) \subseteq ANT(P(A,f))$, passing to complementary sets gives the result. 
\end{proof}    

Theorem \ref{equivalence} provide a \emph{necessary and sufficient conditions} for the termination of linear programs. 
Further, it allows for the reduction of the problem of termination for linear programs to the emptiness check of the corresponding $ANT$ set.  
Also, Corollary \ref{co-ANT} shows that $ANT$ sets allow for the generation of initial
variable values  for which the loop program terminates.
In the following section we prove that the $ANT$ set is a semi-linear space,  and we show how it can be exactly and symbolically computed.

\section{Computation of ANT Sets for Homogeneous Programs}\label{1loop}
Let $P(A,f)$ be a program in $P^{\H}$.  In this section we start 
with Assumption ($\mathcal{H}$) bellow,
which will enable us to compute the sets $ANT(P(A,f))$ explicitly, and
will also help us show that such sets are semi-linear subspaces of $E$. 
In Section \ref{general} we show that a more general assumption reduces to this particular one, and in Section \ref{measure} we will show  
that this assumption is almost always satisfied, except for an extremely small class of programs.

\begin{ass}[$\mathcal{H}$]\label{a}
In this section we will assume that $Spec_\R(A)=Spec_{>0}(A)\cup \{0\}$, and that if $t$ 
is a positive eigenvalue of $A$ then no other eigenvalue of $A$ has the same module. 
\end{ass}

We denote by $\U$ the set of complex numbers of module $1$, \emph{i.e.}, $\U=\{z\in \C\ |\ |z|=1\}$.  We will need the following lemma.

\begin{LM}\label{oscillation}
Let $u=(u_1,\dots,u_r)$ be an element of $\U^r$, where 
$u_i\neq u_j$ and $u_i\neq \overline{u_j}$ when
$i\neq j$, $1\leq i,j\leq r$. 
Let $s_k$, $k\geq 0$, be as
$$s_k=\sum_{i=1}^n (a_i u_i^k+\overline{a_i}\overline{u_i}^k).$$
Then, either all 
$a_i$'s are zero, or there is a $c>0$ and there is an infinite number of $k$'s such that $s_k<-c$.
\end{LM}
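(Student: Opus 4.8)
The plan is to treat $s_k$ as (the real part of) an almost-periodic sequence and show that, unless it vanishes identically, it must dip below a fixed negative constant infinitely often. The cleanest route uses the closure of the orbit $\{(u_1^k,\dots,u_r^k) : k\geq 0\}$ inside the compact torus $\U^r$. By Kronecker/Weyl's theorem, this closure is a closed subgroup $T$ of $\U^r$, and in particular it is symmetric: if $(z_1,\dots,z_r)\in T$ then $(\overline{z_1},\dots,\overline{z_r})\in T$. The function $\Phi(z_1,\dots,z_r)=\sum_{i=1}^r (a_i z_i + \overline{a_i}\,\overline{z_i})$ is continuous and real-valued on $\U^r$, and $s_k=\Phi(u_1^k,\dots,u_r^k)$.

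The first step is to show that if not all $a_i$ are zero, then $\Phi$ is not identically zero on $T$. Here I would use the hypothesis that the $u_i$ are pairwise distinct and pairwise non-conjugate: this guarantees that the characters $z\mapsto z_i$ and $z\mapsto\overline{z_i}$ appearing in $\Phi$ are pairwise distinct characters of $\U^r$, hence their restrictions to the closed subgroup $T$ are linearly independent over $\C$ as functions on $T$ (distinct characters of a compact abelian group are orthonormal in $L^2$). Consequently $\Phi\restriction_T \equiv 0$ forces all coefficients $a_i,\overline{a_i}$ to vanish — contradiction. So $\Phi$ attains a strictly positive value somewhere on $T$, say $\Phi(z^0)=2c'>0$ with $z^0\in T$.

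The second step uses symmetry of $T$: since $\overline{z^0}=(\overline{z_1^0},\dots,\overline{z_r^0})\in T$ as well, and $\Phi(\overline{z^0})=\overline{\Phi(z^0)}=\Phi(z^0)$ is real — wait, more to the point, I instead want a point where $\Phi$ is negative. The quick fix: $T$ is a subgroup, so it contains the identity $(1,\dots,1)$ and is stable under coordinatewise inversion; but the honest argument is that since $\int_T \Phi \, d\mu = 0$ (the integral of any nonconstant character over the compact group $T$ is zero, and the constant term of $\Phi$ is $0$ unless some $u_i=1$ — this needs a small separate check when $u_i=1$ is allowed, in which case $a_i+\overline a_i$ could be the mean; handle that case by noting that if that mean is $\geq 0$ we still get points below it). More simply: a continuous real function on a connected or even just infinite compact group with mean $\leq 0$ that is not constant must take values $<-c$ for some $c>0$ on a set of positive measure. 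Then by equidistribution of $(u^k)$ in $T$ (Weyl), the set $\{k : s_k < -c\}$ has positive density, hence is infinite. The only remaining subtlety is the possibility that $1\in\{u_1,\dots,u_r\}$ or that $T$ is finite; if $T$ is finite, then $(u^k)$ is eventually periodic and $\Phi\restriction_T$ takes its (finitely many) values with positive frequency, so it suffices to know $\Phi\restriction_T$ takes some negative value, which again follows from mean $\leq 0$ and non-constancy.

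The main obstacle is the last step: producing a genuinely \emph{negative} value of $\Phi$ on $T$ (positivity somewhere is easy from linear independence of characters, but I need it to go below a fixed $c>0$ infinitely often, i.e.\ negative values with positive density). I expect to resolve this by the mean-zero argument together with Weyl equidistribution on $T$, being careful to split off the constant character when some $u_i=1$: in that case the "mean" is $a_i+\overline{a_i}=2\mathrm{Re}(a_i)$, and if this is negative we are already done, while if it is $\geq 0$ the oscillatory part (which is nonzero by the character argument applied to the remaining indices) forces dips strictly below that mean, hence below $0$, infinitely often. I would therefore organize the write-up as: (i) reduce to the torus closure $T$; (ii) linear independence of characters $\Rightarrow$ $\Phi\restriction_T\not\equiv 0$; (iii) compute $\int_T\Phi=2\mathrm{Re}(a_{i_0})$ if $u_{i_0}=1$ for some index (else $0$), and conclude $\Phi$ takes a value $<-c$ on a positive-measure subset of $T$; (iv) apply Weyl equidistribution to transfer this to infinitely many $k$.
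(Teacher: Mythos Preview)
The paper's proof takes a very different route from yours: it is two sentences, citing Lemma~4 of \cite{Braverman} as a black box for the dichotomy ``either $s_k\equiv 0$ or $s_k<-c$ infinitely often,'' and then applying Dedekind's theorem on linear independence of characters of $\Z$ to deduce that $s_k\equiv 0$ forces all $a_i=0$. Your equidistribution argument is a legitimate self-contained alternative---essentially a direct proof of Braverman's lemma---and in fact yields a stronger conclusion (positive density of $\{k:s_k<-c\}$, not merely infinitude).

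That said, your treatment of the case $u_{i_0}=1$ contains a real error, and it is not fixable along the lines you sketch. If $u_{i_0}=1$ with $\mathrm{Re}(a_{i_0})>0$, the lemma as stated is simply \emph{false}: take $r=2$, $u_1=1$, $u_2=e^{i\theta}$, $a_1=10$, $a_2=1$; then $s_k=20+2\cos(k\theta)\geq 18$ for every $k$. Your claim that ``the oscillatory part forces dips strictly below that mean, hence below $0$'' is a non sequitur when the mean is positive. The same obstruction arises whenever $u_{i_0}\in\R$ (i.e.\ $u_{i_0}=\pm 1$): then $z_{i_0}$ and $\overline{z_{i_0}}$ restrict to the \emph{same} character of $T$, so linear independence yields only $\mathrm{Re}(a_{i_0})=0$, not $a_{i_0}=0$. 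The paper's Dedekind step has the identical lacuna. The resolution is that in the only place the lemma is invoked (the proof of Theorem~\ref{generation-ANT}), the $u_\mu$ lie in $U_\tau$, hence in the open upper half-plane $H$, so $u_i\notin\R$ automatically; you should add that hypothesis and drop the attempted patch.

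A minor correction to step~(ii): distinct characters of $\U^r$ need not restrict to distinct characters of a closed subgroup. What you actually need---and what the hypotheses $u_i\neq u_j$, $u_i\neq\overline{u_j}$ (for $i\neq j$) together with $u_i\neq\overline{u_i}$ deliver---is that the $2r$ characters $z\mapsto z_i$, $z\mapsto\overline{z_i}$ restrict to pairwise distinct characters \emph{of $T$}; this holds because two characters of $\U^r$ agree on $T$ iff they agree at the topological generator $u$. With that in place, $\int_T\Phi\,d\mu=0$, continuity of $\Phi$, and Weyl equidistribution on $T$ finish the argument exactly as you outline.
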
 
\begin{proof}
According to Lemma 4 of \cite{Braverman}, we know that either $s_k$ is constantly zero, or we are in the second situation of the statement. 
But if $s_k$ is constantly zero, then by 
Dedekind's theorem on linear independence of characters applied to $\Z$, all the $a_i$'s are zero.
\end{proof}

If $\tau$ is a positive real number in $|Spec(A)|-Spec_{>0}(A)$, we set $$A_\tau=\{\mu \in Spec_H(A),|\mu|=\tau\},$$ and 
$$\Sigma_\tau=\oplus_{\mu \in A_\tau} E_{\mu,\overline{\mu}}.$$ We also set $$U_\tau=\{u_\mu=\mu/\tau,\mu\in A_\tau\}.$$
The following proposition
is as a consequence of the Jordan basis Theorems \ref{Jnr} and \ref{JU}, in Section \ref{algebra}. 

\begin{prop}\label{partialsum}
If $\tau$ is a positive real number in $|Spec(A)|-Spec_{>0}(A)$, then for $x_\tau$ in $\Sigma_\tau$, the quantity $f(A^k (x_\tau))$ is of the form 
$$[\sum_{j=0}^{d_\tau-1} (\sum_{u_\mu \in U_\tau} a_{\mu,j}(x_\tau)u_\mu^k+\overline{a_{\mu,j}(x_\tau)u_\mu^k} )k^j]\tau^k,$$ where 
$d_\tau$ is the maximum of the integers $dim_\R(E_{\mu,\overline{\mu}})/2$, for $\mu\in A_\tau$, and the $a_{\mu,j}$'s are $\R$-linear maps 
from $\Sigma_\tau$ to $\C$, which can be computed explicitly.
If $t$ is a positive eigenvalue of $A$, for $x_t$ in $E_t$, the quantity $f(A^k (x_t))$ is of the form  
$$(\sum_{i=0}^{d_t-1} \alpha_{t,i}(x_t)k^i)t^k,$$ where $d_t$ is the dimension of $E_t$, and the $\alpha_{t,i}$'s are $\R$-linear maps 
from $E_t$ to $\R$, which can be computed explicitly.
\end{prop}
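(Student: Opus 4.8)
The plan is to exploit the direct sum decomposition $E = \bigoplus_{\l \in Spec_\R(A)} E_\l \oplus \bigoplus_{\l \in Spec_H(A)} E_{\{\l,\overline\l\}}$ together with the Jordan basis descriptions in Theorems~\ref{Jnr} and~\ref{JU}, and to track how $A^k$ acts on the pieces $\Sigma_\tau$ and $E_t$. I would first treat the real positive eigenvalue case. Fix $t \in Spec_{>0}(A)$ and $x_t \in E_t$. By Theorem~\ref{Jnr} there is a basis $J_t$ of $E_t$ in which $A_{|E_t}$ is block diagonal with Jordan blocks $U_{t,i} = tI + N_i$, where $N_i$ is nilpotent (the superdiagonal of ones). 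Since $tI$ and $N_i$ commute, $A^k_{|E_t}$ restricted to the $i$-th block equals $\sum_{j\ge 0}\binom{k}{j} t^{k-j} N_i^j$, a finite sum because $N_i$ is nilpotent; each entry of this matrix is $t^k$ times a polynomial in $k$ of degree at most $\dim E_t - 1$ (in fact at most the block size minus one, but $\dim E_t$ is a safe global bound). Writing the coordinates of $x_t$ in the basis $J_t$ and applying the linear functional $f$, we get $f(A^k(x_t)) = t^k \sum_{i=0}^{d_t-1}\alpha_{t,i}(x_t)k^i$ where each $\alpha_{t,i}$ is obtained by a fixed linear combination (depending on $f$, $t$, and the Jordan data) of the coordinates of $x_t$; hence each $\alpha_{t,i}$ is $\R$-linear in $x_t$, and the combinatorial/Jordan ingredients are all explicitly computable, so the $\alpha_{t,i}$ are too.

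For the complex case, fix a positive real $\tau \in |Spec(A)| - Spec_{>0}(A)$ and $x_\tau \in \Sigma_\tau$. By definition $\Sigma_\tau = \bigoplus_{\mu\in A_\tau} E_{\mu,\overline\mu}$, and by Theorem~\ref{JU} each $E_{\mu,\overline\mu}$ has a basis in which $A$ acts block-diagonally with blocks $U_{\mu,\overline\mu,i}$ of the stated shape, i.e. $s(\mu,\overline\mu)$ on the diagonal and $I_2$ on the super-diagonal. Since $s(\mu,\overline\mu) = \tau\, r(\mu,\overline\mu)$ with $r(\mu,\overline\mu)$ the rotation matrix, and the diagonal part ($\tau r$ on each $2\times 2$ block) commutes with the nilpotent super-diagonal part, the same binomial expansion gives that the $k$-th power of such a block is $\tau^k$ times a matrix whose entries are $\sum_{j} \binom{k}{j} (\text{entries of } r(\mu,\overline\mu)^{k-j}) \times(\text{nilpotent stuff})$. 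The entries of $r(\mu,\overline\mu)^m$ are $\cos(m\theta_\mu)$ and $\pm\sin(m\theta_\mu)$, i.e. $\R$-linear combinations of $u_\mu^m$ and $\overline{u_\mu}^m$ where $u_\mu = \mu/\tau \in \U$. Collecting terms by the power $k^j$ of $k$ (note $\binom{k}{j}$ is itself a degree-$j$ polynomial in $k$, and $u_\mu^{k-j}$ differs from $u_\mu^k$ by the constant $u_\mu^{-j}$, which can be absorbed into the coefficient), applying $f$, and using $\R$-linearity of the coordinate maps on $\Sigma_\tau$, we obtain the claimed form $\bigl[\sum_{j=0}^{d_\tau-1}\sum_{u_\mu\in U_\tau}\bigl(a_{\mu,j}(x_\tau)u_\mu^k + \overline{a_{\mu,j}(x_\tau)u_\mu^k}\bigr)k^j\bigr]\tau^k$, with $a_{\mu,j}$ an $\R$-linear map $\Sigma_\tau \to \C$; the maximal $j$ needed is one less than the largest $2\times2$-block count in $\Sigma_\tau$, which is $\max_{\mu\in A_\tau}\dim_\R(E_{\mu,\overline\mu})/2 = d_\tau$. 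All constituent pieces (Jordan bases, the matrices $r(\mu,\overline\mu)$, the coordinates of $f$) are explicitly computable, so the $a_{\mu,j}$ are as well.

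The one point that needs a little care — and which I expect to be the main (though still routine) obstacle — is the bookkeeping that turns ``$\tau^k \cos(m\theta)$ and $\binom{k}{j}$-weighted nilpotent contributions'' into the precise stated normal form: one must check that reindexing $u_\mu^{k-j} = u_\mu^{-j}u_\mu^k$ and expanding $\binom{k}{j}$ as a polynomial in $k$ does not raise the degree in $k$ beyond $d_\tau - 1$ (resp. $d_t-1$), and that pairing a term with its complex conjugate correctly accounts for the contribution of $\overline\mu$, so that summing over $u_\mu \in U_\tau$ (one representative per conjugate pair) with the $+\,\overline{(\cdot)}$ captures the full real quantity $f(A^k(x_\tau))$. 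This is purely a matter of organizing the Jordan-form computation; the conceptual content is entirely contained in Theorems~\ref{Jnr} and~\ref{JU} and the commutation of semisimple and nilpotent parts.
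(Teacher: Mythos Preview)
Your proposal is correct and follows exactly the approach the paper indicates: the paper itself does not give a detailed proof of Proposition~\ref{partialsum}, but simply states that it is a consequence of the Jordan basis Theorems~\ref{Jnr} and~\ref{JU}. Your argument---writing each block as a commuting sum of its semisimple and nilpotent parts, expanding $A^k$ by the binomial formula, expressing the entries of $r(\mu,\overline{\mu})^m$ as $\R$-linear combinations of $u_\mu^m$ and $\overline{u_\mu}^m$, and then collecting terms by powers of $k$---is precisely the computation that substantiates this claim, and your remarks on the degree bounds and the conjugate pairing correctly identify the only points where care is needed.
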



We are now going to describe the $ANT(P(A,f))$ sets as semi-linear spaces of $E$.
We note that the linear maps 
$\alpha_{t,i}$ and $a_{\mu,j}$, in Proposition~\ref{partialsum}, can be computed easily. 
In our previous  work \cite{TR-IC-14-09},  we showed how these linear maps are computed efficiently
for programs over the reals and for programs over $\Z$, 
when the induced matrix $A$ had a real spectrum.  
Here, the computation of $\alpha_{t,i}$ and $a_{\mu,j}$ remains similar to those described in \cite{TR-IC-14-09}, Sections $7$ and $8$.

We first introduce the following subsets of $E$.

\begin{df}
For $t\in Spec_{>0}(A)$, and $l$ between $0$ and $d_t-1$, we define the sets $S_{t,l}$ to be the sets of elements 
$x$ in $E$ which satisfy:
\begin{itemize}
\item For $\tau>t$ in $|Spec(A)|$, 
\begin{itemize}
\item if $\tau \notin Spec(A)$, then for all $\mu \in A_\tau$ and $j\in\{0,\dots,d_\tau-1\}$: \begin{equation}a_{\mu,j}(x_\mu)=0.\end{equation}\label{eq1} 
\item if $\tau \in Spec(A)$, then for all $i\in\{0,\dots,d_\tau-1\}$ \begin{equation}\alpha_{\tau,i}(x_\tau)=0.\end{equation}\label{eq2} 
\end{itemize}
\item For all $i$ between $l+1$ and $d_t-1$,  \begin{equation}\alpha_{t,i}(x_t)=0.\end{equation}\label{eq3} 
\item Finally we have the inequalities: 
\begin{equation}\alpha_{t,l}(x_t)>0.\end{equation}\label{eq4} 
\end{itemize}
\vspace*{-6ex}
\end{df}

We can now  state the main result of this section, describing the generic formulas representing exactly and symbolically the $ANT$ sets.

\begin{thm}\label{generation-ANT}
The set $ANT(P(A,f))$ is the disjoint union of the sets $S_{t,l}$, for $t\in Spec_{>0}(A)$, and $l\in \{0,\dots,d_t-1\}$. In other words, considering the set $\Delta_{S}=\{(t,l)\ |\  t\in Spec_{>0}(A), l\in \{0,\dots,d_t-1\}\}$ we have $$ANT(P(A,f))= \bigvee_{(t,l)\in\Delta_S} S_{t,l}.$$ 
\end{thm}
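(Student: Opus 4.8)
The plan is to show that the sets $S_{t,l}$ partition $E$ into a piece on which the program is $ANT$ and the rest, by analyzing the asymptotic behavior of the sequence $f(A^k(x))$. First I would fix an arbitrary $x\in E$ and decompose it according to the direct sum $E=\oplus_{\l \in Spec_\R(A)} E_\l \oplus _{\l \in Spec_H(A)} E_{\{\l,\overline{\l}\}}$ recalled in Section~\ref{algebra}, grouping the summands by modulus: for each $\tau\in |Spec(A)|$ write $x=\sum_\tau x^{(\tau)}$, where $x^{(\tau)}$ lives in $E_\tau$ if $\tau\in Spec_{>0}(A)$, in $\Sigma_\tau$ if $\tau$ is a positive real not in the spectrum, and where the zero eigenvalue contributes only finitely many terms to $f(A^k(x))$ and is hence asymptotically irrelevant. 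By linearity of $f$ and of $A^k$, $f(A^k(x))=\sum_\tau f(A^k(x^{(\tau)}))$, and Proposition~\ref{partialsum} gives the explicit form of each summand: a polynomial-in-$k$ times $\tau^k$ in the real-eigenvalue case, and a polynomial-in-$k$ with oscillating coefficients times $\tau^k$ in the complex case.

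Next I would isolate the dominant modulus. Let $\tau_{\max}$ be the largest $\tau$ with a nonzero contribution after the relevant $\alpha$'s and $a_{\mu,j}$'s vanish; the terms with smaller modulus are $o(\tau_{\max}^k \cdot k^{N})$ for the appropriate $N$ and cannot affect the sign asymptotically. Within the dominant block there are two cases. If $\tau_{\max}$ is not an eigenvalue (complex case), then the leading term is $\tau_{\max}^k k^{d}$ times an oscillating sum $s_k=\sum (a_{\mu,d}(x)u_\mu^k+\overline{a_{\mu,d}(x)u_\mu^k})$ with the $u_\mu\in U_{\tau_{\max}}$ distinct and non-conjugate; by Lemma~\ref{oscillation}, unless all $a_{\mu,d}(x)$ vanish, $s_k<-c$ infinitely often, so $f(A^k(x))<0$ infinitely often and $x$ is \emph{not} $ANT$ (indeed $A^{k_x}(x)$ is never $NT$). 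So on the $ANT$ set all such complex blocks of modulus $>t$ must be killed, which is exactly Eq.~(1); and if some complex block is the strict dominant one, $x\notin ANT$, so the dominant block of an $ANT$ point must be a positive eigenvalue $t$. If $\tau_{\max}=t\in Spec_{>0}(A)$, then $f(A^k(x))=(\sum_{i=0}^{d_t-1}\alpha_{t,i}(x_t)k^i)t^k + o(t^k k^{d_t-1})$, and the sign for large $k$ is governed by the top nonvanishing coefficient $\alpha_{t,l}(x_t)$: if it is positive then $f(A^k(x))>0$ for all large $k$, hence $A^{k_x}(x)$ is $NT$ for suitable $k_x$ and $x\in ANT$; if it is negative, $f(A^k(x))<0$ eventually and $x\notin ANT$; if all $\alpha_{t,i}(x_t)=0$, this block contributes nothing and we pass to the next modulus down. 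This case analysis shows precisely that $x\in ANT(P(A,f))$ iff there is some $(t,l)$ with all higher-modulus complex blocks killed (Eq.~(1)), all higher-modulus real-eigenvalue blocks killed (Eq.~(2)), $\alpha_{t,i}(x_t)=0$ for $i>l$ (Eq.~(3)), and $\alpha_{t,l}(x_t)>0$ (Eq.~(4)); that is, $x\in S_{t,l}$.

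Finally, disjointness: the pair $(t,l)$ is determined by $x$, since $t$ is the largest modulus among positive eigenvalues whose block is not entirely killed and is itself the overall dominant surviving modulus, and $l$ is the largest index with $\alpha_{t,l}(x_t)\neq 0$; so $x$ lies in at most one $S_{t,l}$, and the union in the statement is genuinely disjoint, justifying the notation $\bigvee$. I expect the main obstacle to be the careful bookkeeping of the asymptotic comparison — verifying rigorously that the lower-modulus and lower-degree terms are dominated so that the sign of $f(A^k(x))$ for large $k$ is dictated solely by the leading block, and in the oscillatory case combining Lemma~\ref{oscillation} with the fact that a polynomial factor $k^d>0$ does not change the sign pattern. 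Everything else is linear algebra already packaged in Proposition~\ref{partialsum} and the Jordan basis theorems, plus Definition~\ref{ant} unwound.
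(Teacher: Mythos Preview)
Your approach is essentially the same as the paper's: decompose $x$ by modulus, identify the dominant surviving block, and use Lemma~\ref{oscillation} in the oscillatory case and the sign of the leading $\alpha_{t,l}$ in the real-eigenvalue case, with Proposition~\ref{partialsum} supplying the asymptotic form. The only point you leave implicit that the paper spells out is the degenerate case where \emph{every} block's contribution vanishes (so your $\tau_{\max}$ does not exist): then only the nilpotent part remains, $f(A^k(x))=0$ for all large $k$, and $x\notin ANT(P(A,f))$; otherwise your argument and the paper's coincide.
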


\begin{proof}
First, if $x$ belongs to $S_{t,l}$ then, by assumption, the sequence $fA^kx$ will be asymptotically equivalent to $$t^k\alpha_{t,l}(x_t)k^l,$$ 
which grows without bound.
Hence, $x$ belongs to $ANT(P(A,f))$.

 Conversely, suppose that $x$ belongs to none of the $S_{t,l}$ sets. Let $\tau$ be the highest absolute value among the eigenvalues of $A$, 
such that for $\tau'>\tau$,  $\tau'$ being the module of an eigenvalue of $A$, we have $$a_{\mu,j}(x_\mu)=0$$ for all $\mu \in A_{\tau'}$ and $j\in\{0,\dots,d_{\tau'}-1\}$ when 
$\tau'\notin Spec(A)$, and $$\alpha_{\tau',i}(x_{\tau'})=0$$ for all $i\in\{0,\dots,d_{\tau'}-1\}$ when $\tau'\in Spec(A)$.
Then, 
\begin{itemize}
\item If $\tau=0$, we get $f(A^k(x))=f(A^k(x_0))$, which is constantly zero for $k$ large enough.
Hence, $x$ is not in $ANT(P(A,f))$.\\ 
\item If $\tau>0$, we have two possibilities, depending on whether  
$\tau$ is  in $Spec(A)$, or not.
\begin{itemize}
\item  When $\tau\not\in Spec(A)$, let $l$ be the highest integer between $0$ and $d_\tau-1$ such that $a_{\tau,l}(x_\tau)$ is nonzero. 
 We know, from  Lemma \ref{oscillation}, that for an infinite number of 
$k$'s, the sum $$s_k=\sum_{u_\mu \in U_\tau} a_{\mu,l}(x_\tau)u_\mu^k+\overline{a_{\mu,l}(x_\tau)u_\mu^k}$$ is smaller than a negative number $-c$, which is independent of $k$. 
As the integers $k$ grow it follows, from Proposition \ref{partialsum}, that $fA^kx$ will be equivalent to $fA^kx_\tau$.
By the choice of $l$, the latter itself be equivalent 
to $$s_kk^l\tau^k\leq -ck^l\tau^k,$$ 
which decreases without bound.
Hence, $fA^kx$ will be negative for an infinite number 
of $k$'s, and thus $x$ does not belong to $ANT(P(A,f))$.
\item Nw assume $\tau=t\in Spec_{>0}(A)$, and let $l$ be the highest integer between $0$ and $d_t-1$ such that $\alpha_{t,l}\neq 0$.
Then, as $x$ is not in $S_{t,l}$, we must have $$\alpha_{t,l}(x_t)<0.$$ But $fA^kx$ is equivalent to $$\alpha_{t,l}(x_t)t^k$$ when $k$ grows according to Proposition \ref{partialsum}.
Hence, $fA^kx$ decreases without bound, and so $x$ is not in $ANT(P(A,f))$. 
This completes the proof.
\end{itemize}
\end{itemize}
\end{proof}

In the next section we generalize these results to programs in the  classes $P^{\G}$ and $P^{\A}$.  
We  show that the problem of generating  $ANT$ sets for linear and affine programs reduces to the computation  $ANT$ sets for  specific homogeneous programs under Assumption ($\mathcal{H}$).

\section{Termination over $\Z$ for Linear and Affine Programs}\label{general}
In this section, we extend our methods to  linear  and affine  programs. 
For each of these program classes, the $ANT$ set generation problem is reduced to the computation of $ANT$ sets of corresponding  homogeneous programs under Assumption ($\mathcal{H}$). 

For $f_1,\dots,f_r$ a family of elements in $E^*$, $b\in\R^r$, and $c$ a vector of $E$, we consider the \textit{affine program} 
$P(A,F,b,c)=P(A,(f_i)_{i=1,\dots,r},b,c)\in P^{\A}$: 
$$P(A,(f_1,\dots,f_r),b,c): \texttt{while ($\bigwedge_{1\leq i\leq r}$fi(x)>bi) \{x:=Ax+c\}}.$$ 
We will also consider the \textit{linear program} $P(A,F)=P(A,(f_i)_{i=1,\dots,r})$, where
$$P(A,F)=P(A,(f_1,\dots,f_r))=P(A,(f_1,\dots,f_r),0,0).$$

\subsection{ANT sets for generalized homogeneous programs}\label{ghp}

First , we remove some restrictions on $A$.
We denote by  $R(A)$ the set of nonzero eigenvalues of $A$ with arguments a rational multiple of $2\pi$, \emph{i.e.}, 
$$R(A)=\{\l\in Spec(A),\ Arg(\l)\in 2\pi\Q\}=\{\l\in Spec(A), \ \exists n \in \N, \ \lambda^n>0 \}.$$ 

\begin{ass}[$\mathcal{G}$]\label{a1}
For any eigenvalue $\l$ of $A$ in $R(A)$, if $\mu \in Spec(A)-\{0\}$ is such that $|\mu|=|\l|$, then $\mu$ is equal to $\l$ up to a root of unity in $\C$, i.e. 
if $\l$ and $\mu$ in $Spec(A)-\{0\}$ are such that $|\l|=|\mu|$, then either both are in $R(A)$, or none is. In other words, 
two nonzero eigenvalues with the same module both have an argument which is either a rational multiple of $2\pi$, or none has.
\end{ass}

From now on, we suppose that $A$ satisfies Assumption ($\mathcal{G}$).
As the rational numbers are a negligible set of $\R$, we  see that for a generic matrix $A$, the set $R(A)$ is empty.
Hence, 
almost all matrices $A$ in $\mathcal{M}(n,\R)$ satisfy Assumption ($\mathcal{G}$). In Section \ref{measure}, we will confirm this fact, and actually show more precisely 
that the set of matrices satisfying Assumption ($\mathcal{G}$) contains a dense open set of total measure, \emph{i.e.}, whose complement is of measure zero.


First, we show that we can reduce the computation of the $ANT$ set for an homogeneous program $P(A,f)\in P^{\H}$, when $A$ satisfies Assumption ($\mathcal{G}$), to the intersection of 
the $ANT$ sets of programs $P(G,g)\in P^{\H}$, with $G$ satisfying Assumption ($\mathcal{H}$). This reduction technique is also used in \cite{JO}. 
First, we notice that an appropriate power of $A$ satisfies Assumption ($\mathcal{H}$). 

\begin{prop}\label{AN}
Let $Q$ be the set defined by $$Q=\{\mu/|\mu|, \ \mu \in R(A) \}.$$ 
If $N$ is the \emph{lcm} of the orders of elements of $Q$, then $A^{N}$ satisfies Assumption ($\mathcal{H}$).
\end{prop}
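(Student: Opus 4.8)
The plan is to show that $A^N$ satisfies Assumption ($\mathcal{H}$), which has two parts: (1) $Spec_\R(A^N) = Spec_{>0}(A^N) \cup \{0\}$, i.e. every nonzero real eigenvalue of $A^N$ is positive, and (2) if $t$ is a positive eigenvalue of $A^N$, then no other eigenvalue of $A^N$ has module $|t|$. Recall that the eigenvalues of $A^N$ are exactly the $N$-th powers $\mu^N$ of the eigenvalues $\mu$ of $A$ (with appropriate multiplicities).

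First I would verify part (1). Suppose $\nu = \mu^N$ is a nonzero real eigenvalue of $A^N$, where $\mu \in Spec(A)$. Writing $\mu = |\mu| e^{i\theta}$, we get $\mu^N = |\mu|^N e^{iN\theta}$, which is real precisely when $N\theta \in \pi\Z$. I must show $\mu^N > 0$, i.e. $N\theta \in 2\pi\Z$. Here is where the definition of $N$ enters: if $\mu^N$ is real and nonzero, then $\mu^{2N} = (\mu^N)^2 > 0$, so $\mu \in R(A)$ (as $\mu^{2N} > 0$ exhibits a power of $\mu$ that is positive real, so $Arg(\mu) \in 2\pi\Q$). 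Thus $\mu/|\mu| \in Q$, and its order divides $N$ by definition of $N$ as the lcm of the orders of elements of $Q$. Therefore $(\mu/|\mu|)^N = 1$, which gives $\mu^N = |\mu|^N > 0$. This settles part (1).

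Next I would verify part (2). Let $t = \mu^N$ be a positive eigenvalue of $A^N$; by the argument above $\mu \in R(A)$ and $\mu^N = |\mu|^N$, so $t = |\mu|^N$. Suppose some eigenvalue $\rho^N$ of $A^N$ (with $\rho \in Spec(A)$, $\rho \neq 0$) has the same module, $|\rho^N| = |\rho|^N = t = |\mu|^N$, hence $|\rho| = |\mu|$. Since $\mu \in R(A)$ and $|\rho| = |\mu|$, Assumption ($\mathcal{G}$) forces $\rho \in R(A)$ as well. Then $\rho/|\rho| \in Q$, so its order divides $N$, giving $(\rho/|\rho|)^N = 1$ and $\rho^N = |\rho|^N = |\mu|^N = \mu^N = t$. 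Thus the only eigenvalue of $A^N$ with module $t$ is $t$ itself, which is part (2). (The case $\rho = 0$ is excluded since its module is $0 \neq t$.)

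The main subtlety — the step I'd be most careful with — is the passage through Assumption ($\mathcal{G}$) in part (2): one needs $|\rho| = |\mu|$ with $\mu \in R(A)$ to conclude $\rho \in R(A)$, and this is exactly the content of ($\mathcal{G}$) ("if $\lambda$ and $\mu$ in $Spec(A)-\{0\}$ have the same module, then either both are in $R(A)$ or none is"). A secondary point to state cleanly is that the eigenvalues of $A^N$ are the $N$-th powers of those of $A$, and that an eigenvalue $\nu$ of $A^N$ being real and nonzero really does force the underlying $\mu$ into $R(A)$ — this uses the equivalent characterization $R(A) = \{\lambda \in Spec(A) : \exists n \in \N,\ \lambda^n > 0\}$ stated just before the proposition, applied with $n = 2N$. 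Everything else is routine.
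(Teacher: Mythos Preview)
Your proof is correct and follows essentially the same approach as the paper's: both arguments hinge on showing that any eigenvalue $\mu$ of $A$ whose $N$-th power has the same module as a positive eigenvalue of $A^N$ must lie in $R(A)$ (via Assumption~($\mathcal{G}$)), whence $(\mu/|\mu|)^N=1$ forces $\mu^N=|\mu|^N>0$; the handling of part~(1) via $\mu^{2N}>0$ is also identical. Your write-up is in fact cleaner and more carefully organized than the paper's, which reverses the order of the two parts and contains a few evident typos (e.g.\ conflating $\mu$ and $\mu'$).
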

\begin{proof}
If $r$ belongs to $Spec_{>0}(A^N)$, let $\lambda$ be an eigenvalue of $A^N$, such that $|\l|=r$. 
As $\l$ is in $Spec(A^N)$, it is equal to $\mu^N$ for some $\mu \in Spec(A)$, 
which is in fact in $R(A)$ as $\mu^N>0$.
Similarly, $r=\mu'^N$ for $\mu'\in Spec(A)$. 
But $\mu'^N=r>0$, and so $\mu'\in R(A)$. 
As $$|\mu|=|\mu'|=r^{1/N},$$ by  Assumption ($\mathcal{G}$), $\mu'$ is also 
in $R(A)$.  But then, by the
definition of $N$, we have $\mu'^N>0$, and so $\l=\mu'^N=r$. 
Thus, the second part of Assumption $(\mathcal{H})$ is satisfied. 

Moreover, if $A^N$ had a negative eigenvalue $\l$, again it would be of the form $\l=\mu^N$. But then we would have $\mu^{2N}=\l^2>0$ and so, 
by the definition of $N$, we would get $(\mu/|\mu|)^N=1$.
That is, $\l=|\mu|^N$, which is absurd. 
Hence, $A^N$ also satisfies the first part of Assumption $(\mathcal{H})$.
\end{proof}

We recall that, in the previous section, we showed that if $G$ satisfies assumption ($\mathcal{H}$), then for any $g\in E^*$ the set $ANT(P(G,g))$ is semi-linear, and we computed it explicitly. Now, we show how to compute $ANT(P(A,f))$.

\begin{thm}\label{H-ANT}
We have $ANT(P(A,f))=\cap_{l=0}^{N-1} ANT(P(A^N,fA^l))$.
\end{thm}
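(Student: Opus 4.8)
The plan is to establish the set equality $ANT(P(A,f)) = \bigcap_{l=0}^{N-1} ANT(P(A^N, fA^l))$ by proving the two inclusions, both of which rely on unwinding the definition of $ANT$ (Definition~\ref{ant}) together with the observation that iterating $A^N$ starting from $A^l(x)$ samples the orbit $(f(A^k(x)))_{k\geq 0}$ along the arithmetic progression $k \equiv l \pmod N$.

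For the inclusion $ANT(P(A,f)) \subseteq \bigcap_{l=0}^{N-1} ANT(P(A^N, fA^l))$, I would take $x \in ANT(P(A,f))$, so there is $k_x \geq 0$ with $f(A^{k}(x)) > 0$ for all $k \geq k_x$. Fix $l \in \{0,\dots,N-1\}$. I want to show $x \in ANT(P(A^N, fA^l))$, i.e. that there is $m_x \geq 0$ such that $P(A^N, fA^l)$ is non-terminating on $(A^N)^{m_x}(x)$. Choosing $m_x$ so that $N m_x + l \geq k_x$ works: for every $j \geq 0$ we have $(fA^l)((A^N)^{m_x + j}(x)) = f(A^{N(m_x+j)+l}(x)) > 0$ since the exponent $N(m_x+j)+l \geq k_x$. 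Hence $(A^N)^{m_x}(x)$ is an $NT$ value of $P(A^N,fA^l)$, so $x$ is $ANT$ for $P(A^N,fA^l)$. Since $l$ was arbitrary, $x$ lies in the intersection.

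For the reverse inclusion, suppose $x \in ANT(P(A^N,fA^l))$ for every $l \in \{0,\dots,N-1\}$. Then for each $l$ there is $m_l \geq 0$ with $(fA^l)((A^N)^{m})(x) > 0$ for all $m \geq m_l$, i.e. $f(A^{Nm+l}(x)) > 0$ for all $m \geq m_l$. Set $k_x = N \cdot \max_{0 \leq l \leq N-1} m_l$. Any integer $k \geq k_x$ can be written uniquely as $k = Nm + l$ with $0 \leq l \leq N-1$ and $m = (k-l)/N \geq m_l$ (because $Nm = k - l \geq k_x - (N-1) \geq N(\max_l m_l) - (N-1)$, and since $Nm$ is a multiple of $N$ bounded below by something strictly greater than $N(\max_l m_l - 1)$, we get $m \geq \max_l m_l \geq m_l$; one should just check this inequality chain carefully, possibly enlarging $k_x$ to $k_x = N\max_l m_l + N$ to be safe). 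Then $f(A^k(x)) = f(A^{Nm+l}(x)) > 0$, so $P(A,f)$ is non-terminating on $A^{k_x}(x)$, giving $x \in ANT(P(A,f))$.

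The only delicate point — and thus the main obstacle, though it is a mild one — is the bookkeeping in the second inclusion: making sure the single threshold $k_x$ is large enough that for every residue class $l$ the corresponding quotient exceeds $m_l$. This is purely an elementary arithmetic argument about arithmetic progressions covering a tail of $\N$, and it is harmless to be generous in the choice of $k_x$. No spectral hypotheses on $A$ are needed for this theorem; Assumption $(\mathcal{G})$ and Proposition~\ref{AN} enter only afterwards, to guarantee that each $P(A^N, fA^l)$ falls under Assumption $(\mathcal{H})$ so that Theorem~\ref{generation-ANT} applies and each $ANT(P(A^N,fA^l))$ is an explicitly computable semi-linear set — whence the finite intersection on the right-hand side is semi-linear and computable as well.
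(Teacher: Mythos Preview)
Your proof is correct and follows essentially the same route as the paper: both inclusions are obtained by unwinding Definition~\ref{ant} and observing that the residue classes modulo $N$ partition the tail of the orbit, with the reverse inclusion handled by taking the maximum of the per-residue thresholds. Your treatment of the arithmetic bookkeeping in the reverse inclusion is in fact more careful than the paper's, which simply sets $m_x=\max_l m_{x,l}$ without adjusting for the factor $N$; your observation that no spectral hypothesis is needed here is also accurate.
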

\begin{proof}
It is clear that $$ANT(P(A,f))\subset\cap_{l=0}^{N-1} ANT(P(A^N,fA^l)).$$ Conversely, if $x$ belongs to $\cap_{l=0}^{N-1} ANT(P(A^N,fA^l))$. Then for 
every $l$ between $0$ and $N-1$, there is $m_{x,l}$, such that $k\geq m_{x,l}$, which gives $f A^{kN+l}x>0$. Taking $$m_x=max_{l\in\{0,\dots,N-1\}} m_{x,l},$$ 
we have $$k\geq m_x \Rightarrow f (A^k (x))>0,$$ that is, $x\in ANT(P(A,f))$. This proves the equality.
\end{proof}

Proposition \ref{AN} guarantees that the matrix $A^N$ satisfies Assumption ($\mathcal{H}$). Considering a program $P(A,f)\in P^{\H}$, with $A$ satisfying Assumption ($\mathcal{G}$), Theorem \ref{H-ANT} shows that the $ANT(P(A,f))$ set
is the intersection of the 
$ANT(P(A^N,fA^l))$ sets, with $A^N$ satisfying Assumption ($\mathcal{H}$).
This handles the case of linear homogeneous programs  with one loop condition under assumption ($\mathcal{G}$). 
Now, as in \cite{TR-IC-14-09}, we reduce the computation of the $ANT$ set of a generalized homogeneous program  to that of a homogeneous program.

Consider $P(A,F)=P(A,(f_i)_{i=1,\dots,r})$ in $P^{\G}$.
We start with the following lemma on non-terminating values. 

\begin{df}
The value $x$ is $NT$ for $P(A,F)$ in $P^{\G}$ if and only if it is $NT$ for all $P(A,f_i)$, with $i\in \{1,\dots,r\}$.
\end{df}

Now, we define $ANT$ values  for such programs.

\begin{df}\label{antG}
We say that $x$ is $ANT$ for $P(A,F)$ if there exists $k_x$ such that for all $i\in\{1,\dots , r\}$ we have $f_i(A^k(x)) > 0$ for 
$k>k_x$, that is, if $x$ is $ANT$ for all programs $P(A,f_i)$.
\end{df}  

Again we have the following easily proved but important lemma.

\begin{LM}\label{ANTandNT}
A program $P(A,F)$ is $NT$ if and only if it is $ANT$, that is, $ANT(P(A,F))\neq \emptyset$.
\end{LM}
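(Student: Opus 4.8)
The plan is to mirror the structure of Theorem~\ref{equivalence}, which already established the equivalence $NT \Leftrightarrow ANT$ for the single-condition homogeneous case, and lift it to the conjunction of $r$ loop conditions using Definitions~\ref{antG} and the companion definition of $NT$ for $P(A,F)$. The statement $ANT(P(A,F)) \neq \emptyset \Leftrightarrow P(A,F) \text{ is } NT$ should follow essentially formally from these definitions together with the $A$-stability argument used before.

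First I would prove the easy direction: if $P(A,F)$ is $NT$, then there is some $x \in E$ on which the program never halts, i.e. $f_i(A^k(x)) > 0$ for all $k \geq 0$ and all $i \in \{1,\dots,r\}$. Such an $x$ is in particular $ANT$ for each $P(A,f_i)$ (taking $k_x = 0$), hence $ANT$ for $P(A,F)$ by Definition~\ref{antG}, so $ANT(P(A,F)) \neq \emptyset$. For the converse, suppose $x \in ANT(P(A,F))$. By Definition~\ref{antG} there is a $k_x$ such that $f_i(A^k(x)) > 0$ for all $k > k_x$ and all $i$. Set $y = A^{k_x + 1}(x)$ (or $A^{k_x}(x)$, adjusting the strict inequality bookkeeping). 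Then $f_i(A^j(y)) = f_i(A^{j + k_x + 1}(x)) > 0$ for every $j \geq 0$ and every $i$, so $y$ is a genuine non-terminating input for the conjunctive guard; hence $P(A,F)$ is $NT$. This uses only that $E$ is $A$-stable, exactly as in the proof of Theorem~\ref{equivalence}.

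The one point requiring a little care — and the closest thing to an obstacle, though it is minor — is the uniformization of the threshold index across the $r$ conditions. Definition~\ref{antG} already builds in a single common $k_x$ working for all $i$ simultaneously, so no $\max$ over $i$ is needed here; but one should make sure the phrasing matches whichever convention (strict versus non-strict, $k > k_x$ versus $k \geq k_x$) the paper fixed for $ANT$ of a single condition, so that the shift $y = A^{k_x}(x)$ lands in $NT(P(A,f_i))$ for every $i$ on the nose. Since the guard is a finite conjunction, there are no convergence or limiting issues: the argument is purely a shift-of-orbit observation. I would therefore present the proof in two short paragraphs, remarking that it is the straightforward $r$-fold analogue of Theorem~\ref{equivalence}, and that the same reasoning shows more generally that $P(A,F)$ terminates on an $A$-stable set $K$ if and only if $ANT(P(A,F)) \cap K = \emptyset$.
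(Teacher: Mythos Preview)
Your proposal is correct and matches the paper's own proof essentially step for step: the paper also observes that any $NT$ value is trivially $ANT$, and that if $x\in ANT(P(A,F))$ then some iterate $A^k(x)$ lies in $NT(P(A,F))$, so the two sets are empty or nonempty together. Your added remarks on uniformizing $k_x$ and on $A$-stability are fine elaborations but do not depart from the paper's argument.
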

\begin{proof}
If $x$ belongs to $NT(P(A,F))$, then it belongs to $ANT(P(A,F))$. Conversely, if $x$ belongs to $ANT(P(A,F))$, then 
for some $k$, by definition, $A^k(x)$ belongs to $NT(P(A,F))$. In particular, 
both sets are empty or non empty together, which proves the claim. 
\end{proof}

We can now express the $ANT$ set of programs in $P^{\G}$ as the intersection of $ANT$ sets from corresponding programs in $P^{\H}$. 

\begin{prop}\label{gen2hom}
Let $f_1,\dots,f_r$ be linear forms on $E$, then one has 
$$ANT(P(A,(f_1,\dots,f_r)))=\cap_{i=1}^r ANT(P(A,f_i)).$$
\end{prop}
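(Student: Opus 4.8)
The plan is to prove the set equality by double inclusion, unwinding the definition of $ANT$ for generalized homogeneous programs (Definition \ref{antG}) against the definition of $ANT$ for single-condition homogeneous programs (Definition \ref{ant}). The statement is essentially a reformulation of Definition \ref{antG}, so the argument should be short.

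First I would prove $ANT(P(A,(f_1,\dots,f_r)))\subseteq \cap_{i=1}^r ANT(P(A,f_i))$. Suppose $x\in ANT(P(A,(f_1,\dots,f_r)))$. By Definition \ref{antG}, there is $k_x$ such that $f_i(A^k(x))>0$ for all $i\in\{1,\dots,r\}$ and all $k>k_x$. Fix any $i$. Then $f_i(A^k(x))>0$ for all $k>k_x$, which says precisely that $P(A,f_i)$ is non-terminating on $A^{k_x+1}(x)$; equivalently, taking $k_x$ itself as the witness in Definition \ref{ant}, $x$ is $ANT$ for $P(A,f_i)$. Since $i$ was arbitrary, $x\in\cap_{i=1}^r ANT(P(A,f_i))$.

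For the reverse inclusion $\cap_{i=1}^r ANT(P(A,f_i))\subseteq ANT(P(A,(f_1,\dots,f_r)))$, suppose $x\in ANT(P(A,f_i))$ for every $i$. For each $i$ there is, by Definition \ref{ant}, an index $k_{x,i}$ such that $P(A,f_i)$ is non-terminating on $A^{k_{x,i}}(x)$, i.e. $f_i(A^{k}(x))>0$ for all $k\geq k_{x,i}$. Set $k_x=\max_{1\leq i\leq r} k_{x,i}$. Then for every $i$ and every $k>k_x$ we have $k\geq k_{x,i}$, hence $f_i(A^k(x))>0$. By Definition \ref{antG} this exhibits $x$ as an $ANT$ value of $P(A,(f_1,\dots,f_r))$, so $x\in ANT(P(A,(f_1,\dots,f_r)))$. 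Combining the two inclusions gives the claimed equality.

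There is no real obstacle here: the only point requiring a moment's care is that the witnesses $k_{x,i}$ depend on $i$, so one must take a maximum over the finitely many conditions to produce a single witness $k_x$ that works simultaneously for all $i$ — this is exactly why finiteness of the family $f_1,\dots,f_r$ matters. The monotonicity remark ``$f_i(A^k(x))>0$ for all $k\ge k_{x,i}$ implies the same for all $k\ge k_x$'' is immediate from $k_x\ge k_{x,i}$, and everything else is a direct translation between the two definitions.
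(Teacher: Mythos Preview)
Your proof is correct and follows essentially the same approach as the paper: a double inclusion, using the uniform witness $k_x$ from Definition~\ref{antG} in one direction, and taking $k_x=\max_i k_{x,i}$ to merge the individual witnesses in the other. The paper's argument is identical in structure, just more tersely written.
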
 
\begin{proof}
If $x$ is in $ANT(P(A,(f_1,\dots,f_r)))$, there is $k\geq 0$ such that $f_i A^l(x)>0$ for $l\geq k$, for all $i$, hence 
$x$ belongs to every set $ANT(P(A,f_i))$. Conversely, if $x$ belongs to $\cap_{i=1}^r ANT(P(A,f_i))$, then 
for each $i$, there is $k_i\geq 0$, such that $l\geq k_i$ implies $f_i A^l(x)>0$. Take $k=max_i(k_i)$, then 
$l\geq k$ imples that for every $i$, $f_i A^l(x)>0$, i.e. $x$ belongs to $ANT(P(A,(f_1,\dots,f_r)))$.
\end{proof}

\subsection{ANT sets for affine programs over $\Z$}\label{ap}
First, we define the notion of $ANT$ values for affine programs.  
\begin{df}
Let $P(A,F,b,c)$ be an affine program in
$P^{\mathbb{A}}$. For $x=x_0\in \R^n$, denote by $x_1$ the vector $Ax+ c$ and, 
recursively, let  
$x_k=Ax_{k-1}+c$, $k\geq 1$. 
We say that a vector $x$ is $ANT$ for $P(A,F,b,c)$ if there is some $k_x$ such that $k\geq k_x$ implies $Fx_k>b$. 
We denote by $ANT(P(A,F,b,c))$ the set of $ANT$ input values of $P(A,F,b,c)$.
\end{df}

Consider  the affine program $P(A,F,b,c)=P(A,(f_1,\dots,f_r),b,c)$.
We denote by $E'$ the vector space $E\oplus \R$. 
We denote by 
$A'$ the linear map from $E'$ to itself defined by $$A':x+t\mapsto (Ax+tc)+t,$$ and 
let $f'_i$, $1\leq i\leq r$, be the linear form on $E'$ 
defined by $$f'_i:x+t\mapsto f_i(x)-tb_i,$$ and let $f'_{r+1}:x+t\mapsto t$. Finally, for $x$ in $E$, we set $x'=x+1$ in $E'$. As we have $$Spec(A')=Spec(A)\cup \{1\},$$ we notice at once the following fact.

\begin{fact}\label{a2a} 
$A'$ satisfies Assumption ($\mathcal{G}$), if and only if $A$ satisfies it, and no eigenvalue in $Spec(A)-R(A)$ has module $1$.
\end{fact}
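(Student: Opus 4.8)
The plan is to unwind the two assertions directly from the definition of Assumption ($\mathcal{G}$) and the observation $Spec(A')=Spec(A)\cup\{1\}$, recorded just before the statement. Recall that Assumption ($\mathcal{G}$) concerns the nonzero eigenvalues of a matrix: it asks that whenever two nonzero eigenvalues have the same module, either both lie in $R(\cdot)$ (arguments a rational multiple of $2\pi$) or neither does. Note also that $1\in R(A')$ trivially, since $\operatorname{Arg}(1)=0\in 2\pi\Q$; and $0\in Spec(A')$ iff $0\in Spec(A)$, but the zero eigenvalue is irrelevant to the statement of ($\mathcal{G}$). So throughout we compare nonzero eigenvalues of $A'$, which are exactly $\{1\}\cup(Spec(A)-\{0\})$.

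First I would prove the forward direction. Assume $A'$ satisfies ($\mathcal{G}$). Any pair of nonzero eigenvalues of $A$ is also a pair of nonzero eigenvalues of $A'$, so the ($\mathcal{G}$)-condition for $A'$ applied to that pair gives exactly the ($\mathcal{G}$)-condition for $A$; hence $A$ satisfies ($\mathcal{G}$). For the second half, suppose toward a contradiction that some $\mu\in Spec(A)-R(A)$ has $|\mu|=1$. Then $\mu$ and $1$ are two nonzero eigenvalues of $A'$ with the same module $1$; since $1\in R(A')$, Assumption ($\mathcal{G}$) for $A'$ forces $\mu\in R(A')=R(A)\cup\{1\}$, and since $\mu\neq 1$ would still put $\mu\in R(A)$ (because $R(A')\cap Spec(A)=R(A)$, as adjoining the eigenvalue $1$ changes neither the argument of $\mu$ nor its membership in $2\pi\Q$), this contradicts $\mu\notin R(A)$. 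Hence no eigenvalue in $Spec(A)-R(A)$ has module $1$.

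Conversely, assume $A$ satisfies ($\mathcal{G}$) and no eigenvalue in $Spec(A)-R(A)$ has module $1$; I must check ($\mathcal{G}$) for $A'$. Take two nonzero eigenvalues $\lambda,\mu$ of $A'$ with $|\lambda|=|\mu|$. If both lie in $Spec(A)-\{0\}$, the condition holds by ($\mathcal{G}$) for $A$. If one of them, say $\lambda$, equals $1$, then $|\mu|=1$; by hypothesis $\mu\notin Spec(A)-R(A)$, so either $\mu=1$ (and the pair is $\{1,1\}$, trivially fine) or $\mu\in R(A)\subseteq R(A')$, and since $1\in R(A')$ as well, both eigenvalues lie in $R(A')$ — exactly what ($\mathcal{G}$) requires. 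This exhausts the cases, so $A'$ satisfies ($\mathcal{G}$).

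The argument is essentially bookkeeping; the only point that needs a moment's care is the identity $R(A')\cap Spec(A)=R(A)$ — that adjoining $1$ to the spectrum does not move any $\mu\in Spec(A)$ in or out of the class of eigenvalues with argument in $2\pi\Q$ — but this is immediate from the definition of $R(\cdot)$, which is a pointwise condition on each eigenvalue. So there is no real obstacle here; this Fact is a direct translation of Assumption ($\mathcal{G}$) across the extension $E\rightsquigarrow E'=E\oplus\R$.
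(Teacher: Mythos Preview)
Your proof is correct. The paper does not actually give a proof of this Fact at all: it simply records $Spec(A')=Spec(A)\cup\{1\}$ and says ``we notice at once the following fact,'' so your argument is precisely the routine bookkeeping the authors left to the reader, carried out carefully and correctly.
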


We make this conclusion explicit.
\begin{ass}($\mathcal{A}$)\label{a2}
Let $P(A,F,b,c)$ be an affine program in
$P^{\mathbb{A}}$. 
We say that $A$ satisfies Assumption ($\mathcal{A}$) when it satisfies Assumption ($\mathcal{G}$) and no eigenvalue in $Spec(A)-R(A)$ has module $1$.
\end{ass}

When working with an affine program $P(A,F,b,c)$, also written as $P(A,(f_1,\dots,f_r),b,c)$, we will assume that $A$ satisfies Assumption ($\mathcal{A}$). 

\begin{prop}\label{aff2gen}
The input $x$ is in the set $ANT(P(A,(f_1,\dots,f_r),b,c))$ 
if and only if input $x'$ is in the set $ANT(P(A',(f_1',\dots,f_r')))$.
\end{prop}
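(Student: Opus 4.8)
The plan is to unwind both definitions of $ANT$ and track how a trajectory of the affine program lifts to a trajectory of the linear program on $E' = E \oplus \R$. The key observation is that the extra coordinate is ``frozen'' at $1$ along the orbit starting from $x' = x + 1$: since $A'(y + t) = (Ay + tc) + t$, the last component is never changed, so $A'^k(x') = x_k + 1$ where $x_k$ is exactly the $k$-th iterate of the affine map $y \mapsto Ay + c$ starting at $x_0 = x$. This identity, proved by a one-line induction on $k$, is the heart of the argument.

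First I would establish that $A'^k(x') = x_k + 1$ for all $k \geq 0$, by induction: the base case is the definition of $x'$, and the inductive step is $A'^k(x') = A'(x_{k-1} + 1) = (A x_{k-1} + c) + 1 = x_k + 1$. Next I would evaluate the lifted functionals on this orbit: for $1 \leq i \leq r$, $f_i'(A'^k(x')) = f_i'(x_k + 1) = f_i(x_k) - b_i$, using the definition $f_i' : y + t \mapsto f_i(y) - t b_i$ with $t = 1$. (Note that the functional $f_{r+1}'$ does not appear in $P(A', (f_1', \dots, f_r'))$, so it plays no role here.) Hence $f_i(x_k) > b_i$ holds if and only if $f_i'(A'^k(x')) > 0$, simultaneously for all $i \in \{1, \dots, r\}$, which is to say $F x_k > b$ if and only if $(f_1', \dots, f_r')$ evaluated at $A'^k(x')$ is entrywise positive.

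Finally I would match the quantifier structure. By definition, $x \in ANT(P(A, (f_1, \dots, f_r), b, c))$ means there is a $k_x$ with $F x_k > b$ for all $k \geq k_x$; by the equivalence just established this is the same as: there is a $k_x$ with $f_i'(A'^k(x')) > 0$ for all $i$ and all $k \geq k_x$, which by Definition~\ref{antG} (applied to the generalized homogeneous program $P(A', (f_1', \dots, f_r'))$) is exactly the statement $x' \in ANT(P(A', (f_1', \dots, f_r')))$. This gives the claimed equivalence in both directions at once.

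There is essentially no hard step here; the proposition is a bookkeeping translation. The only point requiring a little care is making sure the frozen coordinate really does stay equal to $1$ — i.e. that the lift is $x' = x + 1$ and not some other affine section — and that one uses $P(A', (f_1', \dots, f_r'))$ without the functional $f_{r+1}'$, which is reserved for a separate purpose elsewhere. One should also note in passing that Assumption~($\mathcal{A}$) on $A$ (via Fact~\ref{a2a}) guarantees $A'$ satisfies Assumption~($\mathcal{G}$), so that the $ANT$ set on the right-hand side is amenable to the computation developed in Section~\ref{ghp}; but this is not needed for the equivalence itself, only for the subsequent algorithmic exploitation.
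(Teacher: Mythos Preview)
Your proof is correct and follows essentially the same approach as the paper: both establish by induction that the $A'$-orbit of $x'$ has the form $x_k + 1$ (equivalently $(x_k,1)^\top$ in the paper's matrix notation), then evaluate the lifted functionals and match the $ANT$ definitions. The only cosmetic differences are that the paper works in a fixed basis with block matrices while you argue coordinate-free, and the paper carries along $f_{r+1}'$ (observing it is automatically positive) whereas you correctly note it is absent from the statement.
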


\begin{proof}
 Fix $B$ a basis of $E$, and let $\Ab=Mat_B(A) \in \mathcal{M}(n,\R)$, $\F\in \mathcal{M}(r,n,\R)$ the matrix 
with rows equal to the $Mat_B(f_i)$'s, $\bb= Mat_B(b)$  in $\mathcal{M}(1,r,\R)$, and let $\cb=Mat_B(c)$. Let 
$B'$ be the basis of $E\oplus \R$, with first vectors $(e_i,0_\R)$, for $e_i$ in $B$, and last vector 
$(0_E,1)$. Now let $\Ab'=Mat_{B'}(A')\in \mathcal{M}(n+1,\R)$ and $\F'\in
\mathcal{M}(r+1,n+1,\R)$ the matrix with rows $Mat_{B'}(f'_i)$.
Clearly, we have 
$\Ab'=\begin{bmatrix} \Ab &\cb \\ 0 & 1\end{bmatrix}$, and
$\F'=\begin{bmatrix} \F &-\bb \\ 0 & 1\end{bmatrix}$.   
To say that $(x, 1)^{\top}$ is $ANT$ for $P(\Ab',\F')$ means that there exists $k_x$, such that when $k\geq k_x$, we get $\F' \Ab'^k \cdot (x,1)^{\top}>0$. We define $x_k$ by induction, as $x_0=x$, and $x_{k+1}=Ax_k+c$.
But as 
$\Ab' \cdot (x, 1)^{\top}= \begin{small}\begin{pmatrix} \Ab x+\cb \\ 1 \end{pmatrix}\end{small} = (x_1 , 1)^{\top}$, by induction, we obtain $\Ab'^k \cdot (x , 1)^{\top} = (x_k , 1)^{\top}$, $k\geq 1$. 
But
$\F' \Ab'^k \cdot (x ,1 )^{\top}=\F' \cdot (x_k, 1)^{\top} =  \begin{small}\begin{pmatrix} \F x_k-\bb \\ 1 \end{pmatrix}\end{small}$. Hence, 
$\F' \Ab'^k \cdot (x , 1)^{\top}>0$ is equivalent to $\F x_k> \bb$, and the result follows.
\end{proof}

Proposition \ref{aff2gen} shows that the generation of the $ANT$ set for a program in $P^{\mathbb{A}}$ reduces to the generation of the $ANT$ set for an associated program in $P^{\G}$, and that 
reduces to the computation of $ANT$ sets for corresponding homogeneous programs in $P^{\H}$.
These two reduction provide us with computational methods for 
the automatic generation of $ANT$ sets for affine programs under Assumption ($\mathcal{A}$). Now, we can state the following termination result for generalized homogeneous and affine programs over $\Z$. 

\begin{thm}\label{decid}
Let $A\in \mathcal{M}(n,\Z)$ be a matrix over the integers associated to  loop instructions. 
\begin{itemize}
\item Then $P(A,(f_1,\dots,f_r))$ terminates on $\Z^n$ if and only if 
$$ANT(P(A,f_1,\dots,f_r))\cap \Z^n=\emptyset,$$ 
\item $P(A,(f_1,\dots,f_r),b,c)$ terminates on $\Z^n$ if and only if 
$$ANT(P(A',(f_1',\dots,f_r')))\cap \Z^n\times 1=\emptyset.$$ 
\end{itemize}
\end{thm}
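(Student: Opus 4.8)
The plan is to deduce Theorem~\ref{decid} directly from the already-established characterizations, so that essentially no new work is needed beyond assembling the pieces. For the first item, I would invoke Theorem~\ref{equivalence} (in its general form) with the $A$-stable subset $K=\Z^n$: since $A\in\mathcal{M}(n,\Z)$, we have $A\Z^n\subseteq\Z^n$, so $\Z^n$ is indeed $A$-stable, and hence $P(A,(f_1,\dots,f_r))$ is terminating on $\Z^n$ if and only if $ANT(P(A,(f_1,\dots,f_r)))\cap\Z^n=\emptyset$. To make this airtight I should first check that the generalized-homogeneous notion of termination on a set reduces correctly to the conjunction over the $f_i$'s, using Proposition~\ref{gen2hom}: $P(A,(f_1,\dots,f_r))$ is terminating on $x$ exactly when it is \emph{not} $NT$ on $x$, and being $NT$ on $x$ for the generalized program means being $NT$ on $x$ for every $P(A,f_i)$; combining with Lemma~\ref{ANTandNT} and the set-version of Theorem~\ref{equivalence} applied coordinatewise, the emptiness of $ANT(P(A,(f_1,\dots,f_r)))\cap\Z^n$ is equivalent to termination on all of $\Z^n$.

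For the second item, the strategy is to transfer everything to the associated linear homogeneous program $P(A',(f_1',\dots,f_r'))$ on $E'=E\oplus\R$ via Proposition~\ref{aff2gen}. That proposition gives a bijection between $ANT$ inputs: $x\in ANT(P(A,(f_1,\dots,f_r),b,c))$ if and only if $x'=x+1\in ANT(P(A',(f_1',\dots,f_r')))$. Restricting to integer inputs, $x\in\Z^n$ corresponds precisely to $x'\in\Z^n\times\{1\}$, so $ANT(P(A,(f_1,\dots,f_r),b,c))\cap\Z^n=\emptyset$ if and only if $ANT(P(A',(f_1',\dots,f_r')))\cap(\Z^n\times\{1\})=\emptyset$. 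It then remains to note that $P(A,(f_1,\dots,f_r),b,c)$ terminates on $\Z^n$ if and only if it has no $ANT$ integer input: one direction is immediate since an $ANT$ input yields, after $k_x$ iterations, a genuinely non-terminating integer point $x_{k_x}$ (integrality is preserved because $A\in\mathcal{M}(n,\Z)$ and, for the result to be meaningful over $\Z$, $c\in\Z^n$); the converse is the trivial inclusion $NT\subseteq ANT$.

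The main subtlety — and the place I would be most careful — is the passage from ``$ANT$ set meets $\Z^n$'' to ``the program fails to terminate on some integer input,'' because an $ANT$ input $x$ need not itself be $NT$: the guarantee is only that $A^{k_x}(x)$ (resp.\ $x_{k_x}$ in the affine case) is $NT$. The resolution is exactly the $A$-stability argument already used in the proof of Theorem~\ref{equivalence}: since $A\Z^n\subseteq\Z^n$, the witness point $A^{k_x}(x)$ stays in $\Z^n$, and in the affine case the sequence $x_k=Ax_{k-1}+c$ stays in $\Z^n$ provided $c$ is integral, so the non-terminating witness is itself an integer input. Hence the emptiness of $ANT\cap\Z^n$ is genuinely equivalent to termination on $\Z^n$, and no extra hypothesis beyond $A$ (and $c$) being over $\Z$ is needed; Assumption~($\mathcal{A}$) is not required for the \emph{statement} of this equivalence, only later for the \emph{effective computability} of the $ANT$ set. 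I would state this integrality-of-the-witness observation explicitly as the one nontrivial step, and then the two claimed equivalences follow by chaining Theorem~\ref{equivalence}, Proposition~\ref{gen2hom}, Lemma~\ref{ANTandNT}, and Proposition~\ref{aff2gen}.
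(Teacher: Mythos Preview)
Your proposal is correct and follows essentially the same route as the paper: for the first item you apply Theorem~\ref{equivalence} with the $A$-stable set $K=\Z^n$ together with Proposition~\ref{gen2hom} (and Lemma~\ref{ANTandNT}) to handle the conjunction of conditions, and for the second item you pass to $P(A',F')$ via Proposition~\ref{aff2gen} and then invoke Theorem~\ref{equivalence} with the $A'$-stable set $K'=\Z^n\times\{1\}$. Your explicit remark that $c\in\Z^n$ is needed for $A'$-stability of $\Z^n\times\{1\}$ is a welcome clarification that the paper leaves implicit.
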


\begin{proof}
We know that the computation of the $ANT(P(A,(f_1,\dots,f_r)))$ sets reduces to the  intersection of the $ANT(P(A,f_i))$ sets, with $1\leq i \leq r$,
by Proposition \ref{gen2hom}. 
For the $ANT(P(A,f_i))$ sets we  apply Theorem \ref{equivalence}, with  
$K=\Z^n$, and thus establish the first assertion.
We saw that $x\in K\subseteq E$ is $ANT$ (resp. NT) for $P(A,F,b,c)$ if and only if $x'=(x, 1)^{\top}$ is $ANT$ (resp. NT) for $P(A',F')$, with 
$A'=\begin{bmatrix} A & c \\ & 1 \end{bmatrix}$, and $F'=\begin{bmatrix} F &-b \\ 0 & 1\end{bmatrix}$. We  apply Theorem \ref{equivalence} with
$K'=\{x',x\in \Z^n\}$, which is $A'$-stable.
\end{proof}

The following corollary states the main decidability result for the termination problem for affine programs over the integers.

\begin{cor}\label{z-decidability}
Under Assumption $(\mathcal{G})$, the termination over $\Z^n$ of programs in the form $P(A,f_1,\dots,f_r)$ is decidable. 
Under Assumption ($\mathcal{A}$), the termination of programs $P(A,f_1,\dots,f_r,b,c)$ 
over $\Z^n$ is decidable.
\end{cor}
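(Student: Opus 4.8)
The plan is to assemble Corollary~\ref{z-decidability} from the ingredients already laid out, so the proof is essentially a bookkeeping argument that traces the chain of reductions and observes that each step is effective. First I would recall that, by Theorem~\ref{decid}, termination of $P(A,f_1,\dots,f_r)$ over $\Z^n$ is equivalent to $ANT(P(A,f_1,\dots,f_r))\cap\Z^n=\emptyset$, and termination of the affine program $P(A,F,b,c)$ over $\Z^n$ is equivalent to $ANT(P(A',f_1',\dots,f_r'))\cap(\Z^n\times\{1\})=\emptyset$. Thus decidability reduces to the ability to (i) compute a finite symbolic description of the relevant $ANT$ set, and (ii) decide whether that set meets the integer lattice (or the affine slice $\Z^n\times\{1\}$).

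For step (i) I would compose the reductions proved in Section~\ref{general}. Under Assumption~$(\mathcal{G})$, Proposition~\ref{gen2hom} reduces $ANT(P(A,f_1,\dots,f_r))$ to a finite intersection $\bigcap_{i=1}^r ANT(P(A,f_i))$ of homogeneous $ANT$ sets; each $ANT(P(A,f_i))$ is in turn, by Theorem~\ref{H-ANT}, the finite intersection $\bigcap_{l=0}^{N-1} ANT(P(A^N,f_iA^l))$, where $N$ is the effectively computable \emph{lcm} of Proposition~\ref{AN}, and $A^N$ satisfies Assumption~$(\mathcal{H})$ by that same proposition; finally, Theorem~\ref{generation-ANT} writes each such $ANT(P(A^N,g))$ explicitly as the finite disjoint union $\bigvee_{(t,l)\in\Delta_S} S_{t,l}$, where the $S_{t,l}$ are cut out by the linear equalities and inequalities of Eqs.~(\ref{eq1})--(\ref{eq4}), whose defining linear maps $\alpha_{t,i},a_{\mu,j}$ are computable as noted after Proposition~\ref{partialsum}. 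Hence $ANT(P(A,f_1,\dots,f_r))$ is a computable semi-linear (in fact, semi-algebraic of a very restricted shape: finite boolean combinations of linear equalities and strict linear inequalities with real-algebraic coefficients) subset of $E$. For the affine case I would additionally invoke Fact~\ref{a2a} and Proposition~\ref{aff2gen}: Assumption~$(\mathcal{A})$ on $A$ is exactly what guarantees $A'$ satisfies~$(\mathcal{G})$, so the homogeneous-linear machinery applies to $P(A',f_1',\dots,f_r')$ and yields a computable symbolic description of $ANT(P(A',f_1',\dots,f_r'))$.

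For step (ii), deciding $ANT\cap\Z^n=\emptyset$, I would argue that testing emptiness of the intersection of such a semi-linear set with $\Z^n$ (resp. with $\Z^n\times\{1\}$, which is again a coset of a sublattice) is decidable: because the whole predicate is a finite boolean combination of linear equalities and strict linear inequalities over the reals, membership of a lattice point is a formula in the first-order theory of the ordered field of reals (or, after clearing denominators along the rational directions, in Presburger arithmetic on the sublattices where all constraints become integral), and so emptiness of the integer solution set is decidable by quantifier elimination over $\R$ together with the fact that the $A$-invariant subspace decomposition and the maps $\alpha_{t,i},a_{\mu,j}$ have algebraic-number coefficients obtainable from $\chi_A$. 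Concretely one intersects finitely many $S_{t,l}$-type regions, over finitely many indices $l\in\{0,\dots,N-1\}$, $i\in\{1,\dots,r\}$, and $(t,l)\in\Delta_S$, so the whole test is a finite computation. Combining (i) and (ii) gives the two decidability statements.

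The main obstacle I expect is not conceptual but a matter of being careful about what ``computable'' means at the arithmetic level: the eigenvalues of $A$, the orders of the roots of unity in $Q$, the invariant-subspace projections, and the linear functionals $\alpha_{t,i},a_{\mu,j}$ all live in a number field (the splitting field of $\chi_A$), so one must check that the required equalities/inequalities among these algebraic quantities are effectively decidable and that the final intersection-with-$\Z^n$ test can be phrased in a decidable theory (real-closed fields suffices, since $\Z^n$ membership of a point with rational coordinates is decidable and one can reduce to rational coordinates after separating the $R(A)$ and non-$R(A)$ directions). Once this is granted, the corollary is an immediate consequence of Theorem~\ref{decid} together with Propositions~\ref{AN}, \ref{gen2hom}, \ref{aff2gen}, Fact~\ref{a2a}, and Theorems~\ref{H-ANT}, \ref{generation-ANT}.
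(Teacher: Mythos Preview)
Your step~(i) is correct and follows the paper's own chain of reductions (Theorem~\ref{decid}, Proposition~\ref{gen2hom}, Theorem~\ref{H-ANT}, Proposition~\ref{AN}, Theorem~\ref{generation-ANT}), spelled out in somewhat more detail than the paper bothers with. The divergence, and the genuine gap, is in step~(ii).

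You need to decide whether a set of the form ``finite boolean combination of linear equalities and strict linear inequalities with real-algebraic coefficients'' meets $\Z^n$. Your justification does not establish this. Quantifier elimination over real closed fields handles only first-order sentences over $(\R,+,\cdot,<)$; the sentence ``$\exists x\in\Z^n$ such that $x$ lies in the set'' is not such a sentence, because $\Z$ is not first-order definable in $\R$. Presburger arithmetic goes the other way: it is the theory of $(\Z,+,<)$, and your constraints are linear forms whose coefficients lie in the splitting field of $\chi_A$, so they are not Presburger formulas, and ``clearing denominators along rational directions'' does not make them so. The paper closes this gap by invoking an external decidability result of Khachiyan: it is decidable whether a \emph{convex} semi-algebraic subset of $\R^n$ contains a point of $\Z^n$. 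Each $S_{t,l}$ is convex (an open half of a linear subspace), and after distributing the intersections coming from Theorem~\ref{H-ANT} and Proposition~\ref{gen2hom} over the unions coming from Theorem~\ref{generation-ANT}, the $ANT$ set is a finite union of convex semi-linear pieces; Khachiyan's result is applied piece by piece. For the affine case the paper projects $\R^n\times\{1\}\cap ANT(P(A',f_1',\dots,f_r'))$ to $\R^n$ and applies the same result. That black box is precisely the missing ingredient in your argument.
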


\begin{proof}
We appeal to a result in \cite{Khachiyan97}, which asserts that it can be decided 
if a convex semi-algebraic subspace of $\R^n$, contains an element 
of the lattice $\Z^n$. We apply it to the subspace $ANT(P(A,f_1,\dots,f_r))$ of $\R^n$ in the first case. In the second case, 
we apply it to the image of the subspace $$\R^n\times \{1\}\cap ANT(P(A',f_1',\dots,f_r'))$$ of $\R^n\times \{1\}$ under the canonical projection 
from  $\R^n\times \{1\}$ to $\R^n$.
\end{proof}

Corollary \ref{z-decidability} provides the most complete response to the termination problem left open in \cite{Braverman}.

\section{Matrices Satisfying Assumptions ($\mathcal{G}$) or ($\mathcal{A}$)}\label{measure}
In this section, we show that Assumptions ($\mathcal{G}$) or ($\mathcal{A}$) are almost always satisfied. 

\begin{thm}\label{thm-mes}
The set of matrices $A$ in $\mathcal{M}(n,\R)$ satisfying Assumption $(\mathcal{G})$ contains a dense open subset of $\mathcal{M}(n,\R)$, and of total Lebesgue measure in $\mathcal{M}(n,\R)$. The same assertion is true for matrices satisfying Assumption ($\mathcal{A}$).
\end{thm}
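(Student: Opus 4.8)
The plan is to show that the complement of the set of matrices violating Assumption $(\mathcal{G})$ (resp. $(\mathcal{A})$) is contained in a closed set of Lebesgue measure zero, and that this closed set has empty interior. First I would reformulate the bad event in terms of the characteristic polynomial. A matrix $A$ fails Assumption $(\mathcal{G})$ precisely when there exist two nonzero eigenvalues $\lambda,\mu\in Spec(A)$ with $|\lambda|=|\mu|$, exactly one of which has argument in $2\pi\Q$. I would stratify this bad set as a countable union: over all pairs of "patterns'' for how such a coincidence can be witnessed by the roots of $\chi_A$ — namely, for each $n$, the condition "$\lambda^n>0$ for some eigenvalue $\lambda$ while another eigenvalue $\mu$ with $|\mu|=|\lambda|$ satisfies $\mu^m\notin\R_{>0}$ for all $m$'' — and show each stratum lies in a proper real-algebraic subvariety of $\mathcal{M}(n,\R)\cong\R^{n^2}$.

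The key steps, in order: (i) Fix $n$ and consider the condition that $A$ has an eigenvalue $\lambda\neq 0$ with $\lambda^n\in\R_{>0}$, i.e. a root of $\chi_A$ that is also a root of $X^n-t$ for some $t>0$; having simultaneously a second eigenvalue $\mu$ of the same modulus $|\lambda|=t^{1/n}$ is an algebraic constraint on the coefficients of $\chi_A$ (the resultant of $\chi_A(X)$ and $\chi_A(\zeta X)$ over roots of unity $\zeta$, combined with $X^n - t \mid \chi_A$, cut out a variety). (ii) Observe that the map $A\mapsto(\text{coefficients of }\chi_A)$ is a polynomial map $\R^{n^2}\to\R^n$, so the preimage of any proper algebraic subset of coefficient space is a proper algebraic subset of $\mathcal{M}(n,\R)$; a proper real-algebraic subvariety has Lebesgue measure zero and empty interior. (iii) Take the countable union over the finitely-many-per-$n$ patterns and over $n\in\N$; a countable union of measure-zero sets has measure zero, and by Baire its complement is dense — but I would get openness of a subset of the good set more directly by noting that the good set contains the set of matrices all of whose eigenvalues are distinct, nonzero, and have pairwise distinct moduli, which is the complement of an algebraic subvariety (the discriminant locus together with the "equal modulus'' locus together with $\{\det A = 0\}$), hence open and dense of full measure; and such matrices trivially satisfy $(\mathcal{G})$ since no two eigenvalues share a modulus. (iv) For Assumption $(\mathcal{A})$, add the further generic condition "no eigenvalue of modulus $1$'', which excludes only the algebraic locus $\{\det(A\overline{A}^{\mathsf{T}}\ldots)=\pm 1\}$-type conditions — concretely, $A$ has an eigenvalue of modulus $1$ iff $\chi_A$ and its "reciprocal-conjugate'' share a root, again an algebraic condition on the coefficients; intersecting the open dense full-measure good set for $(\mathcal{G})$ with the complement of this locus preserves openness, density, and full measure, by Fact \ref{a2a}.

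The main obstacle I anticipate is step (i): making precise, in a way that is genuinely algebraic in the matrix entries, the statement "two eigenvalues have equal modulus.'' Equal modulus $|\lambda|=|\mu|$ with $\lambda\neq\overline\mu$ is not by itself an algebraic condition on a single polynomial's coefficients over $\R$ — it becomes one only once we also impose $\lambda^n>0$ (so $\lambda=\rho\zeta$ with $\rho>0$ and $\zeta^n=1$, pinning $|\lambda|^n=\lambda^n$ to a coefficient-expressible quantity) and then ask $\chi_A$ to have another root on the circle $|z|=\rho$, which we detect via $\mathrm{Res}_X(\chi_A(X),\,X^{2n}\overline{\chi_A}(\rho^2/X))=0$ or an equivalent symmetric-function identity. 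Handling the "exactly one of $\lambda,\mu$ has rational argument'' clause cleanly — rather than both or neither — requires care, but it is harmless: it suffices to bound the bad set by the larger "there exist two distinct-up-to-conjugacy nonzero eigenvalues of equal modulus, at least one with $\lambda^n>0$ for some $n$,'' which is the countable union described and which already has measure zero. Everything else — the measure-zero and empty-interior properties of proper real-algebraic varieties, the polynomiality of $A\mapsto\chi_A$, stability of the good properties under the finite intersections needed for $(\mathcal{A})$ — is routine.
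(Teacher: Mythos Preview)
Your fallback in step (iii) is the right idea and is essentially what the paper does: exhibit, inside the $(\mathcal{G})$-good set, the set of matrices with simple spectrum and pairwise distinct moduli among non-conjugate eigenvalues, and show that \emph{that} set is open, dense, and of full measure. The gap is in how you justify the last part. You assert that the ``equal modulus'' locus is a real-algebraic subvariety of $\mathcal{M}(n,\R)$, but this is not obvious and, as you yourself flag, $|\lambda|=|\mu|$ is not a polynomial relation in the coefficients of $\chi_A$. Your attempt in step (i) to algebraize it via $\mathrm{Res}_X\bigl(\chi_A(X),\,X^{2n}\overline{\chi_A}(\rho^2/X)\bigr)$ still carries the transcendental parameter $\rho=|\lambda|$, so it does not produce a polynomial condition on the entries of $A$; and the countable stratification over $n$ with $\lambda^n>0$ gives at best a countable union of semi-algebraic sets, from which you cannot extract an \emph{open} full-measure subset by Baire alone.

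The paper avoids algebraicity altogether and argues differentially. On the open set $U=\{\mathrm{disc}(\chi_A)\neq 0\}$ it uses that the ``roots'' map $P\mapsto (z_1(P),\dots,z_n(P))$ is a local diffeomorphism from $\R_{n,1}[X]^{\mathrm{reg}}$ onto a submanifold of $\C^n$, so that near any $A\in U$ one has smooth local coordinates given by the (real and imaginary parts of the) eigenvalues. In those coordinates the condition $|z_i|^2=|z_j|^2$ is a single real-analytic equation, hence a smooth hypersurface; the equal-modulus locus is then locally a finite union of at most $\binom{n}{2}$ hypersurfaces, pulled back to $\mathcal{M}(n,\R)$ via the submersion $A\mapsto \chi_A$. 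Hypersurfaces have measure zero and nowhere-dense complement is open --- no algebraicity needed. Your argument becomes correct if you replace ``complement of an algebraic subvariety'' by ``complement of a locally finite union of smooth hypersurfaces in $U$'' and supply this local-coordinates justification; the treatment of $(\mathcal{A})$ then goes through as you describe, adding the hypersurface $|z_i|^2=1$.
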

\begin{proof}
We consider the set $U$ of $\mathcal{M}(n,\R)$, of semi-simple ---~that is, diagonalizable over 
$\C$,~--- matrices with distinct eigenvalues. It is the complement set 
of zeros of $$P:A\mapsto disc(\chi_A,\chi_A'),$$ 
where $disc$ stands for the discriminant.
Thus, $disc$ is dense, open, and of total measure. We denote by $W$ the open subset of $\C^n$, consisting of 
$n$-uples $(z_1,\dots,z_n)$, which satisfy $z_i\neq z_j$ when $i\neq j$. 

Let $\sigma_i(z_1,\dots,z_n)$ denote the coefficient of $X^i$ in $(X-z_1)\dots(X-z_n)$. 
It is well known that the map $\sigma$ from $W$ to $\C^n[X]$, defined by 
$$\sigma:(z_1,\dots,z_n)\mapsto (\sigma_0(z_1,\dots,z_n),\dots,\sigma_{n-1}(z_1,\dots,z_n)),$$ 
is a local diffemorphism, as its Jacobian at $z$ equals,  up to the sign, the product $$\prod_{i<j}(z_i-z_j).$$ 
We are going to show the set of matrices in $U$, which do not have two non-conjugate eigenvalues with the same absolute value is open, dense, and of total measure 
in $\R^n$. As this set is contained in the set of matrices satisfying ($\mathcal{G}$), this will prove our first assertion.

Let  $\C_{n,1}[X]^{reg}$ denote the set of monic polynomials of $\C_n[X]$ with distinct roots, and let 
$P$ be such a polynomial.
Number its roots as $(z_1(P),\dots,z_n(P))\in W$.
Then, there is an open neighborhood $\mathcal{N_P}$ of $P$ in $\C_{n,1}[X]^{reg}$ such that, 
for $Q$ in $\mathcal{N_P}$, one can number the roots of $Q$ as  $(z_1(Q),\dots,z_n(Q))\in W$, and 
$$R:Q\mapsto (z_1(Q),\dots,z_n(Q))$$ is a smooth diffeomorphism from $\mathcal{N_P}$ to its open image $R(\mathcal{N_P})\subset W$. 
Hence, if $P$ belongs to $\R_{n,1}[X]^{reg}= \C_{n,1}[X]^{reg}\cap \R_n[X]$, we have that  $$\mathcal{N_P}^r=\mathcal{N_P}\cap \R_n[X]\subset \R_{n,1}[X]^{reg}$$ is a submanifold of $\mathcal{N_P}$, $R(\mathcal{N_P}^r)$ is a submanifold of 
$R(\mathcal{N_P})$, and the restriction of $R$ to $\mathcal{N_P}^r$ is thus a smooth diffeomorphism to its image $R(\mathcal{N_P}^r)$. In fact, it is easy to see what 
$R(\mathcal{N_P}^r)$ looks like.

We denote by $B(u,\epsilon)$ the open ball of radius $\epsilon>0$ around the complex number $u$. Suppose that $(z_1(P),\dots,z_n(P))$ is ordered in such a way that $z_1(P),\dots,z_a(P)$ are real,
and the other roots come in $b$ couples of conjugate complex numbers $(z_i(P),z_{i+1}(P))$ with $z_{i+1}(P)=\overline{z_i(P)}$ and  $n=a+2b$.
Then, one can choose 
$\mathcal{N_P}$ such that for some positive $\epsilon$ , $R(\mathcal{N_P}^r)$ is diffeomorphic to the product
\begin{align*}
]z_1(P)-\epsilon,z_1(P)+\epsilon[\,\,\,\times\, \dots\, \times\,\,\, ]z_a(P)-\epsilon,z_a(P)+\epsilon[\\ \times B(z_{a+1},\epsilon)\times B(z_{a+3},\epsilon)\dots \times B(z_{a+2b-1},\epsilon).\end{align*}
In particular, the intersection of $R(\mathcal{N_P}^r)$ with the set $|z_i|=|z_j|$ when $i$ and $j$
are such that 
$z_i(P)$ and $z_j(P)$ are not conjugate, is a hypersurface of $R(\mathcal{N_P}^r)$. 

Finally, as the map $$A\in U\mapsto \chi_A\in \R_{n,1}[X]^{reg}$$ is submersive everywhere 
then, the set of matrices in $U$, which have two distinct non conjugate eigenvalues with the same module, is locally the union of at most $n(n-1)/2$ hypersurfaces. In particular, 
its complementary set is open, dense, and of total measure in $U$, hence in $\R^n$. 
We have proved our first assertion. 

The second assertion's proof is completely similar.
\end{proof}

\section{Discussion}\label{discussion}

In  this section we note some related works.
Then we summarize some of our previous results 
along similar lines, and list the main contributions presented here.

\subsubsection{Related work\rm:}
Concerning the \emph{termination analysis} for affine programs over the reals, rationals and the integers, we reduced the problem to the emptiness check of the generated $ANT$ sets.
By so doing, we obtained a characterization of terminating linear programs which allows for a practical and computational procedure. 
In~\cite{Braverman, tiwaricav04}, the authors focused on the decidability of the termination problem for linear loop programs.  
Also, the techniques in \cite{Braverman} are based on the approach in \cite{tiwaricav04}, but now considering termination analysis over the rationals and integers for \emph{homogeneous programs} only. 
But the termination problem for \emph{general affine} programs over the integers is left open in~\cite{Braverman}. 

Recently, in \cite{JO}, considering the $ANT$ set and a technique similar to our approach previously proposed in \cite{scss2013rebiha, TR-IC-14-09}, the authors were able to answer this question for programs with semi-simple matrices, using strong results from analytic number theory, and diophantine geometry. 
By contrast, in \cite{JO} the author focus on decidability results, and the $ANT$ set is not explicitly computed there. In fact, the  $ANT$ set is referred to  as a semi-algebraic set and  the use of 
quantifier elimination techniques is suggested.  
  In this work, although we also considered the termination problem, 
we addressed a more general problem, namely, the \emph{conditional termination} problem of generating static sets of terminating and non-terminating inputs. 
We provide efficient computational methods allowing for the exact computation and symbolic representation of the $ANT$ sets for affine loop programs over $\R$, $\Q$,  $\Z$, and $\N$.
The $ANT$ sets generated by our approach can be seen as a precise over-approximation
for the set of non-terminating inputs.
We use ``precise'' in the sense that $NT \subseteq ANT$ and all elements in $ANT$, even those  not in $NT$, are directly associated with non-terminating values, modulo a finite numbers of loop iterations.  
The, possibly infinite, complement of an $ANT$ set is also a ``precise'' under-approximation of the set of terminating inputs, as it provides terminating input data entering the loop at least once.  

Our method differs from those proposed in \cite{Cook:2008}, as we do not use the synthesis of ranking functions.  

The methods proposed in \cite{Gulwani:2008} can provide non-linear preconditions, but we always generate semi-linear sets as precondition for termination, which facilitates the static analysis of liveness properties. 

The approach in \cite{Bozga} considers first octagonal relations and the associated class of formulae representing weakest recursive sets. 
It also suggests the use of quantifier elimination techniques and algorithms, which would 
require an exponential running time complexity of order $O(n^3\cdot 5^n)$, where $n$ is the number of variables. 
They also consider the conditional termination problem for restricted subclasses of
linear affine relations, where the  associated matrix has to be diagonalizable
and  with all non-zero eigenvalues of multiplicity one. 
They also identify other classes where the generated precondition would be non-linear.   

The experiments in \cite{samir:2013}, involving handwritten programs, are handled successfully by our algorithm presented in a companion article \cite{TR-IC-14-09}, more oriented towards static program analysis. The strength and the practical efficiency of the approach is shown by our
experiments dealing with a large number of larger linear loops. 
In \cite{TR-IC-14-09}, we present several details related to the application of the theoretical contributions exposed here. 
Our prototype was tested and the average time to generate the $ANT$ over $9000$ randomly generated loops was 0.75 seconds. In this experiment, the associated  matrices were triangularizable, with a number of variables between $3$ and $15$, and a number of conjunctions forming the loop condition between $1$ to $4$. 
In this more static program analysis applied work, we used examples from \cite{samir:2013,Cook:2008,Bozga,Braverman,2013:POPL,Podelski04}. 

\subsubsection{Our prior work\rm:}
We list here the points most relevant to the present discussion. 
\begin{itemize}
\item In~\cite{TR-IC-13-07, TR-IC-13-08} we provided new termination analysis algorithms that ran in polynomial time complexity. 
\item We considered the set of \emph{asymptotically non-terminating initial variable values} for the first time in ~\cite{scss2013rebiha}.
In that work we generated the $ANT$ set for a restricted class of linear programs over the reals, with only one loop condition, and where the associated linear forms of the loop lead to diagonalizable systems.
\item  In \cite{TR-IC-14-09} we showed how to automatically generate the $ANT$ sets for linear and affine programs. In that work, we also handled the case of linear or affine programs over $\Z$ with transition matrices admitting a real spectrum. It is the first substantial contribution on termination of linear program over the integers.
Here, we removed these restrictions.  
In \cite{TR-IC-14-09}, we also treated the case of matrices with a real spectrum. But if that is the case,
if two distinct eigenvalues have the same module, 
one is the opposite of the other, that is, they are equal up to the root of unity. 
In particular, in this case, Assumption 
($\mathcal{G}$) is always satisfied,  and so 
the results obtained here fully generalize those obtained in \cite{TR-IC-14-09}.

\end{itemize}

\subsubsection{The main contributions\rm:}
The central contributions presented in this article are listed below.
\begin{itemize}
\item Our criteria for termination over stable subspaces allowed us to show that termination for linear or affine programs over $\Z$ is decidable for almost the whole class of such programs.
\item We proved that the $ANT$ set is a semi-linear space, and we provided a computational method allowing for their automatic generation. 
\item We rigorously proved  that our assumption holds for almost all linear or affine programs by showing that the excluded programs forms an extremely small set of zero Lebesgue measure.
\item Our main results, Theorems  \ref{equivalence}, \ref{generation-ANT}, \ref{H-ANT}, \ref{gen2hom}, \ref{aff2gen}, \ref{decid}, \ref{z-decidability}, and \ref{thm-mes}, are evidences of the novelty of our approach. 
\end{itemize}

\section{Conclusions}\label{conclusion}
In terms of decidability results, we provide the most complete response to the termination problem 
for linear or affine programs over the integers.
We reduced the termination problem of linear, affine programs over $\Z$ to the emptiness check of the $ANT$ set of corresponding homogeneous linear programs. 
Then, we proved that these sets are semi-linear spaces which are easy to compute and manipulate.  

These theoretical contributions are mathematical in nature with proofs that are quite technical. We showed, however, that these results 
can be directly applied in practical ways.
One can rely the ready-to-use formulas representing the $ANT$ set provided in this article. 

Also, any static program analysis technique could incorporate, by a simple and direct instantiation, 
the generic ready-to-use formulas representing the preconditions for termination
and non-termination.  

\bibliographystyle{splncs}
\bibliography{termination}

\begin{thebibliography}{10}

\bibitem{Braverman}
Braverman, M.:
\newblock Termination of integer linear programs.
\newblock In: In Proc. CAV06, LNCS 4144, Springer (2006)  372--385

\bibitem{turing}
Turing, A.M.:
\newblock On computable numbers, with an application to the
  {E}ntscheidungsproblem.
\newblock Proceedings of the London Mathematical Society \textbf{2}(42) (1936)
  230--265

\bibitem{Cook:2008}
Cook, B., Gulwani, S., Lev-Ami, T., Rybalchenko, A., Sagiv, M.:
\newblock Proving conditional termination.
\newblock In: Proceedings of the 20th International Conference on Computer
  Aided Verification. CAV '08, Berlin, Heidelberg, Springer-Verlag (2008)
  328--340

\bibitem{Cook:2006}
Cook, B., Podelski, A., Rybalchenko, A.:
\newblock Termination proofs for systems code.
\newblock SIGPLAN Not. \textbf{41}(6) (June 2006)  415--426

\bibitem{Colon:2001}
Col\'{o}n, M., Sipma, H.:
\newblock Synthesis of linear ranking functions.
\newblock In: Proceedings of the 7th International Conference on Tools and
  Algorithms for the Construction and Analysis of Systems. TACAS 2001, London,
  UK, Springer-Verlag (2001)  67--81

\bibitem{Colon02}
Col\'{o}n, M.A., Sipma, H.B.:
\newblock Practical methods for proving program termination.
\newblock In: In CAV2002: Computer Aided Verification, volume 2404 of LNCS,
  Springer (2002)  442--454

\bibitem{Bradley05linearranking}
Bradley, A.R., Manna, Z., Sipma, H.B.:
\newblock Linear ranking with reachability.
\newblock In: In CAV, Springer (2005)  491--504

\bibitem{Bradley05terminationanalysis}
Bradley, A.R., Manna, Z., Sipma, H.B.:
\newblock Termination analysis of integer linear loops.
\newblock In: In CONCUR, Springer-Verlag (2005)  488--502

\bibitem{Dams}
Dams, D., Gerth, R., Grumberg, O.:
\newblock A heuristic for the automatic generation of ranking functions.
\newblock In: Workshop on Advances in Verification. (2000)  1--8

\bibitem{Podelski04}
Podelski, A., Rybalchenko, A.:
\newblock A complete method for the synthesis of linear ranking functions.
\newblock In: VMCAI. (2004)  239--251

\bibitem{tiwaricav04}
Tiwari, A.:
\newblock Termination of linear programs.
\newblock In Alur, R., Peled, D., eds.: Computer Aided Verification, 16th
  International Conference, CAV 2004, Boston, MA, USA. Volume 3114 of Lecture
  Notes in Computer Science., Springer (2004)  70--82

\bibitem{JO}
Ouakine, J., Pinto, J.S., Worrell, J.:
\newblock On termination of integer linear loops.
\newblock Technical report, http://arxiv.org/abs/1407.1891. (July 2014)

\bibitem{Ben-Amram:2012}
Ben-Amram, A.M., Genaim, S., Masud, A.N.:
\newblock On the termination of integer loops.
\newblock ACM Trans. Program. Lang. Syst. \textbf{34}(4) (December 2012)
  16:1--16:24

\bibitem{Bozga}
Bozga, M., Iosif, R., Konecen\'{y}, F.:
\newblock Deciding conditional termination.
\newblock In: Proceedings of the 18th International Conference on Tools and
  Algorithms for the Construction and Analysis of Systems. TACAS'12, Berlin,
  Heidelberg, Springer-Verlag (2012)  252--266

\bibitem{Cousot:2012}
Cousot, P., Cousot, R.:
\newblock An abstract interpretation framework for termination.
\newblock SIGPLAN Not. \textbf{47}(1) (January 2012)  245--258

\bibitem{Gulwani:2008}
Gulwani, S., Srivastava, S., Venkatesan, R.:
\newblock Program analysis as constraint solving.
\newblock In: Proceedings of the 2008 ACM SIGPLAN Conference on Programming
  Language Design and Implementation. PLDI '08, New York, NY, USA, ACM (2008)
  281--292

\bibitem{Bradley05Manna}
Bradley, A.R., Manna, Z., Sipma, H.B.:
\newblock Termination of polynomial programs.
\newblock In: In VMCAI'2005: Verification, Model Checking, and Abstract
  Interpretation, volume 3385 of LNCS, Springer (2005)  113--129

\bibitem{Chen2012}
Chen, H.Y., Flur, S., Mukhopadhyay, S.:
\newblock Termination proofs for linear simple loops.
\newblock In: Proceedings of the 19th international conference on Static
  Analysis. SAS'12, Berlin, Heidelberg, Springer-Verlag (2012)  422--438

\bibitem{Ben}
Ben-Amram, A.M., Genaim, S., Masud, A.N.:
\newblock On the termination of integer loops.
\newblock In: VMCAI. (2012)  72--87

\bibitem{2013:POPL}
Ben-Amram, A.M., Genaim, S.:
\newblock On the linear ranking problem for integer linear-constraint loops.
\newblock In: Proceedings of the 40th annual ACM SIGPLAN-SIGACT symposium on
  Principles of programming languages. POPL '13, New York, NY, USA, ACM (2013)
  51--62

\bibitem{TR-IC-13-07}
Rebiha, R., Matringe, N., Moura, A.V.:
\newblock Necessary and sufficient condition for termination of linear
  programs.
\newblock Technical Report IC-13-07, Institute of Computing, University of
  Campinas (February 2013)

\bibitem{TR-IC-13-08}
Rebiha, R., Matringe, N., Moura, A.V.:
\newblock A complete approach for termination analysis of linear programs.
\newblock Technical Report IC-13-08, Institute of Computing, University of
  Campinas (February 2013)

\bibitem{scss2013rebiha}
Rebiha, R., Matringe, N., Moura, A.V.:
\newblock Generating asymptotically non-terminant initial variable values for
  linear diagonalizable programs.
\newblock In Kovacs, L., Kutsia, T., eds.: SCSS 2013. Volume~15 of EPiC
  Series., EasyChair (2013)  81--92

\bibitem{TR-IC-14-09}
Rebiha, R., Matringe, N., Moura, A.V.:
\newblock Generating asymptotically non-terminant initial variable values for
  linear programs.
\newblock Technical Report IC-14-09, Institute of Computing, University of
  Campinas (June 2014)

\bibitem{Khachiyan97}
Khachiyan, L., Porkolab, L.:
\newblock Computing integral points in convex semi-algebraic sets.
\newblock In: In FOCS'1997.  162--171

\bibitem{samir:2013}
Ganty, P., Genaim, S.:
\newblock Proving termination starting from the end.
\newblock In: CAV. (2013)  397--412

\end{thebibliography}

\end{document}